\newtheorem{prop}{Proposition}
\DeclareMathAlphabet\mathbfcal{OMS}{cmsy}{b}{n}
\newcommand{\bx}{\boldsymbol{x}}
\newcommand{\bX}{\boldsymbol{X}}
\newcommand{\bN}{\boldsymbol{N}}
\newcommand{\bw}{\boldsymbol{w}}
\newcommand{\bW}{\boldsymbol{W}}
\newcommand{\by}{\boldsymbol{y}}
\newcommand{\bY}{\boldsymbol{Y}}
\newcommand{\bz}{\boldsymbol{z}}
\newcommand{\bv}{\boldsymbol{v}}
\newcommand{\bmu}{\boldsymbol{\mu}}
\newcommand{\bu}{\boldsymbol{u}}
\newcommand{\balpha}{\boldsymbol{\alpha}}
\newcommand{\bSigma}{\boldsymbol{\Sigma}}
\newcommand{\bpi}{\boldsymbol{\pi}}
\newcommand{\bbeta}{\boldsymbol{\beta}}
\newcommand{\bdelta}{\boldsymbol{\delta}}
\newcommand{\boldeta}{\boldsymbol{\eta}}
\newcommand{\btheta}{\boldsymbol{\theta}}
\journal{arXiv}
\begin{document}
\begin{frontmatter}



\author[inst1]{Otto, Arnoldus F. \corref{cor1}\orcidlink{0000-0002-6565-2910}}
\ead{arno.otto@up.ac.za}
\cortext[cor1]{Corresponding author}
\author[inst1]{Bekker, Andri\"ette \orcidlink{0000-0003-4793-5674}}
\author[inst2]{Punzo, Antonio \orcidlink{0000-0001-7742-1821}}
\author[inst3]{Ferreira, Johannes T.\orcidlink{0000-0002-5945-6550}}
\author[inst4]{Tortora, Cristina \orcidlink{0000-0001-8351-3730}}

\affiliation[inst1]{organization={Department of Statistics, University of Pretoria, Pretoria, South Africa}}
\affiliation[inst2]{organization={Department of Economics and Business, University of Catania, Catania, Italy}}
\affiliation[inst3]{organization={School of Statistics and Actuarial Science, University of the Witwatersrand, Johannesburg, South Africa}}
\affiliation[inst4]{organization={Department of Mathematics and Statistics, San José State University, California, United States of America}}

\title{Mixtures of multivariate linear asymmetric Laplace regressions with multiple asymmetric Laplace covariates}


\begin{abstract}
In response to the challenge of accommodating non-Gaussian behaviour in data, the shifted asymmetric Laplace (SAL) cluster-weighted model (SALCWM) is introduced as a model-based method for jointly clustering responses and random covariates that exhibit skewness. Within each cluster, the multivariate SAL distribution is assumed for both the covariates and the responses given the covariates. 
To mitigate the effect of possible atypical observations, a heavy-tailed extension, the contaminated SALCWM (cSALCWM), is also proposed. In addition to the SALCWM parameters, each mixture component has a parameter controlling the proportion of outliers, one controlling the proportion of leverage points, one specifying the degree of outlierness, and another specifying the degree of leverage. The cSALCWM has the added benefit that once the model parameters are estimated and the observations are assigned to components, a more refined intra-group classification in typical points, (mild) outliers, good leverage, and bad leverage points can be directly obtained. An expectation-conditional maximization algorithm is developed for efficient maximum likelihood parameter estimation under this framework. Theoretical identifiability conditions are established, and empirical results from simulation studies and validation via real-world applications demonstrate that the cSALCWM not only preserves the modelling strengths of the SALCWM but also significantly enhances outlier detection and overall inference reliability. The methodology proposed in this paper has been implemented in an \texttt{R} package, which is publicly available at \url{https://github.com/arnootto/ALCWM.}
\end{abstract}

\begin{keyword}
contamination\sep cluster weighted models \sep mild outliers \sep shifted asymmetric Laplace 
\end{keyword}

\end{frontmatter}

\section{Introduction} \label{sec:Intro}

Clustering and regression are fundamental tools in statistical modelling, often employed to uncover patterns and relationships within data. 
Traditional finite mixture models have been widely used for clustering, yet they typically assume normality and fail to incorporate dependencies between variables. 
When a clear regression relationship exists among variables, it becomes crucial to account for these dependencies to extract meaningful insights.

Finite Mixture of Regressions Model (MRM; \citet{desarbo1988maximum}) is a well-established approach that models such dependencies while assuming fixed covariates. 
However, a key limitation of MRM is that it does not consider the distribution of covariates themselves, making it less flexible in scenarios where covariates contribute to the group structure. 

To overcome these constraints, the Cluster-Weighted Model (CWM), first introduced by \citet{gershenfeld1997nonlinear}, provides a more comprehensive approach by explicitly modelling the distribution of covariates. 
This model, also known as a finite mixture of regressions with random covariates, offers enhanced flexibility by jointly modelling both response variables and covariates within a clustering framework. 
Over the years, CWMs have been extensively studied and extended in various directions; examples are \citet{ingrassia2015generalized}, \citet{ingrassia2016decision}, \citet{galimberti2020note}, \citet{povcuvca2020modeling}, \citet{punzo2021multivariate},
and \citet{perrone2024parsimonious}.
However, as noted by \citet{gallaugher2022multivariate}, relatively little attention has been devoted to incorporating skewed distributions, limiting their ability to model skewness often present in real-world data.

Herein, we extend this branch of literature by considering the multivariate shifted asymmetric Laplace (SAL) distribution \citep{franczak2013mixtures} to model scenarios where, in each cluster, both the responses given the covariates and random covariates exhibit skewness, respectively. 
The SAL has gained popularity as a viable alternative model for data exhibiting non-Gaussian behaviour, offering a peaked, skewed distribution, with tails heavier than the multivariate normal distribution. 
These properties make it particularly well-suited for applications across multiple disciplines \citep{morris2013dimension, tortora2024laplace, mclaughlin2024unsupervised}.

Real data are, however, often contaminated by atypical values that affect the estimation of the model parameters or, in a regression context, the regression coefficients. 
Improper imposition of CWMs in the presence of these atypical values can underestimate standard errors and overstate the significance of the regression coefficients, which could lead to misleading inference.  

This raises the question: how should these atypical observations be handled? 
To answer this, it is important to note that atypical values are generally divided into two broad categories: 
\begin{enumerate}
    \item Mild atypical values: observations sampled from some population different or even far from the assumed model.
    \item Gross atypical values: observations that cannot be modelled by a distribution as they are unpredictable.
\end{enumerate}
In the presence of gross atypical values, the recommended approach is to eliminate the observations or choose a suitable method for suppressing them \citep{barnett1994outliers}. 
For mild atypical values, which is of specific interest in this paper, it is usually recommended to use a heavy-tailed model that is flexible enough to accommodate the atypical data points \citep{ritter2014robust}. 
However, as emphasized by \citet{davies1993identification}, atypical observations should be defined with respect to a reference distribution. 
That is, the shape of the good (typical or non-atypical) points has to be assumed to define what a bad point is, and the region of bad (atypical) points can be defined, e.g., as a region where the density of the reference distribution is low.

\citet{punzo2017robust} introduced the contaminated Gaussian CWM (CGCWM), which provides some resistance to atypical values but is limited by its assumption that all components are symmetric. 
Although incorporating contamination can help mitigate the influence of atypical values, it may misclassify observations near the tails of asymmetric components as atypical. 
Additionally, the CGCWM (and other symmetric CWMs) may overfit the data by assigning multiple symmetric components to represent a single asymmetric one.


Our proposal is twofold: first, we introduce the SAL cluster-weighted model (SALCWM) to jointly model responses and random covariates that exhibit skewness cluster-wise; second, we propose a heavy-tailed extension, the contaminated SALCWM (cSALCWM), which offers greater robustness to atypical values while preserving the ability to capture skewness in the clusters.

The paper is structured as follows. In Section \ref{Section required background}, some key results are introduced that are used to develop the SALCWM and cSALCWM, which are presented in Section \ref{Section Methodological}. 
There, the proposed CWMs are also compared with mixtures of Laplace regressions. 
Sufficient conditions for identifiability are given in Section \ref{Section Identifiability}, while an expectation-conditional maximization algorithm for parameter estimation is outlined in Section \ref{Sec maximum likelihood estimation}, along with additional operational details. 
A sensitivity analysis demonstrating the effectiveness of the cSALCWM in the presence of mild atypical values is presented in Section \ref{Section sensitivity anslys}. 
Finally, Section \ref{Section Application} provides a real-world application of the SALCWM and cSALCWM, and Section \ref{Section conclusion} offers some concluding remarks.

\section{Required background}\label{Section required background}
In this section, we introduce the key results underlying the development of the SALCWM and the cSALCWM. 
Specifically, Section \ref{Section shifted asymmetric laplace} presents the SAL distribution, while the cSAL distribution is given in Section \ref{Section contaminated shifted asymmetric laplce}.
Section \ref{Section general cluster weighted model} outlines the general CWM framework.
\subsection{Shifted asymmetric Laplace distribution} \label{Section shifted asymmetric laplace}

In this paper, we define $ \bW \sim \text{SAL}_p (\bmu, \bSigma,\balpha) $ to be a $p$-variate random vector following a multivariate SAL distribution with location parameter $ \bmu \in \mathbb{R}^p $, skewness parameter $ \balpha \in \mathbb{R}^p $, and $ p \times p $ non-negative definite matrix $ \bSigma $. 
It follows from \cite{franczak2013mixtures} that the density of $ \bW $ can be expressed as
\begin{align}\label{pdf SAL}
f_{\text{SAL}}(\bw ; \bmu, \bSigma,\balpha) 
&= \frac{2 \exp \{(\bw - \bmu)^\top \bSigma^{-1} \balpha \}}{(2 \pi)^{p/2} |\bSigma|^{1/2}} \left( \frac{(\bw - \bmu)^\top \bSigma^{-1} (\bw - \bmu)}{2 + \balpha^\top \bSigma^{-1} \balpha} \right)^{\nu / 2} K_\nu (u),
\end{align}
where 
$\nu = \frac{2 - p}{2}$,
$u = \sqrt{(2 + \balpha^\top \bSigma^{-1} \balpha) (\bw - \bmu)^\top \bSigma^{-1} (\bw - \bmu)}$, and
$ K_{\nu}(\cdot) $ is the modified Bessel function of the third kind with index $ \nu $.  The random vector $ \bW \sim \text{SAL}_p (\bmu, \bSigma,\balpha) $ can be written using the following stochastic representation
\begin{align}\label{eq AL mixture representation}
\bW = \bmu + V \balpha + \sqrt{V} \bN,
\end{align}
where $ V $ follows an exponential distribution with rate 1, i.e., $ V \sim \text{Exp}(1) $, and $ \bN $ follows a $p$-variate {Gaussian (normal) distribution with mean vector $ \mathbf{0} $ and covariance matrix $ \bSigma $, i.e., $ \bN \sim \mathcal{N}_p (\mathbf{0}, \bSigma) $; consequently, $\bW|V=v \sim \mathcal{N}_p \left( \bmu + v \balpha, v \bSigma \right)$. It follows from \eqref{eq AL mixture representation} that $ \bW $ belongs to the class of multivariate normal variance-mean mixtures and that the expected value and covariance of $ \bW $ are given by, respectively,
\begin{align}
\mathbb{E}(\bW) &= \bmu + \balpha, \\
\text{Cov}(\bW) &= \bSigma + \balpha \balpha^\top.
\end{align}
If $ p = 1 $, then the characteristic function of $ W $ corresponds to a univariate asymmetric Laplace distribution, i.e., $ W \sim \text{SAL}_1 (\mu, \phi, \alpha) $, where $ \nu = 1/2 $ and the Bessel function can be written as $ K_{1/2}(u) = \sqrt{\frac{\pi}{2u}} \exp \{-u\} $. As a result, the density of $ W \sim \text{SAL}_1 (\mu,  \phi, \alpha) $ can be expressed as
\begin{align}
f_{\text{SAL}} (w; \mu, \phi, \alpha) &= \frac{1}{\gamma} \exp \left\{ - \frac{|w - \mu|}{\phi^2} - \gamma \, \alpha \, \text{sign}(w - \mu) \right\},
\end{align}
where $ \gamma = \sqrt{\alpha^2 + 2 \phi^2} $ and all other terms are as previously defined (see \citealp{kotz2001laplace}, for details).

\subsection{Contaminated shifted asymmetric Laplace distribution} \label{Section contaminated shifted asymmetric laplce}
\cite{morris2019asymmetric} employed the approach adopted by \cite{punzo2016parsimonious} to define multivariate contaminated normal mixtures, and introduced the multivariate contaminated SAL (cSAL) distribution with density
\begin{align}\label{pdf csal}
    f_{\text{cSAL}} (\bw; \bmu, \bSigma,\balpha, \delta, \eta) = (1-\delta) \underbrace{f_{\text{SAL}} (\bw;\bmu, \bSigma,\balpha)}_{\text{reference}} + \delta \underbrace{f_{\text{SAL}} (\bw;\bmu, \eta {\bSigma}, \sqrt{\eta} \balpha)}_\text{contaminant}, 
\end{align}
where $\delta \in (0, 1)$ denotes the proportion of bad points and $\eta>1$ denotes the degree of contamination. 
Because of the assumption $\eta>1$, it can be interpreted as the increase in variability due to the bad observations. 
The mode and covariance matrix of the contaminant component $f_{\text{SAL}} (\bw;\bmu, \eta {\bSigma},\sqrt{\eta} \balpha)$ of the cSAL distribution in \eqref{pdf csal} are $\bmu$ and $\eta \left( {\bSigma} + \balpha \balpha^\top \right)$, respectively. 
Hence, the contaminant component has the same mode $\bmu$ and inflated covariance matrix (recall that $\eta > 1$) with respect to the good component; moreover, $\bmu$ is also the mode of the cSAL distribution. 

The SAL distribution can be obtained as a limiting case when $\delta\to1^-$ or $\delta\to0^+$, and $\eta\to1^+$. 
Another advantage is that once the parameters are estimated, say $\hat{\bmu},\hat{{\bSigma}}, \hat{\balpha}, \hat{\delta}$, and $\hat{\eta}$, we can determine whether an observation $\bw$ is good or bad with respect to the reference SAL distribution, using the \emph{a posteriori} probability
\begin{align}
    P(\bw \text{ is good}\mid\hat{\bmu},\hat{{\bSigma}}, \hat{\balpha}, \hat{\delta},\hat{\eta}) = \frac{(1-\hat{\delta})f_{\text{SAL}}(\bw;\hat{\bmu},\hat{{\bSigma}}, \hat{\balpha})}{f_{\text{cSAL}}(\bw;\hat{\bmu},\hat{{\bSigma}}, \hat{\balpha},\hat{\delta},\hat{\eta})},
\end{align}
 and $\bw$ will be considered good if $    P(\bw \text{ is good}\mid\hat{\bmu},\hat{{\bSigma}}, \hat{\balpha}, \hat{\delta},\hat{\eta})\geq0.5$, and bad otherwise.
 
\subsection{The general cluster-weighted model}\label{Section general cluster weighted model}
In many applied problems, the random vector $\bW$ is composed of a $d_{\bY}$ variate response vector $\bY$ and a random vector of covariates $\bX$ of dimension $d_{\bX}$, with $d_{\bX}+d_{\bY}=d_{\bW}$, i.e., $\bW=(\bX,\bY)$. 
In general, a CWM  represents the joint distribution of a response variable $\bY$ and a (random) covariate $\bX$ as
\begin{align}\label{eq general cwm}
p(\bx,\by,\btheta)=\sum^G_{g=1}\pi_gf(\by\mid\bx;\btheta_{\bY|g})f(\bx;\btheta_{\bX|g}),
\end{align}
where $\btheta=\{\pi_g,\btheta_{\bX|g},\btheta_{\bY|g};g=1,\dots,G\}$ represents the set of all parameters, and $\pi_g$ are positive weights, with $\sum^G_{g=1}\pi_g=1$. 
Hereafter, we will add a subscript to all parameters to distinguish those related to $\bX$ from those referring to $\bY$.

Unlike classical mixture models, the CWM factorizes the joint distribution of $p(\bx, \by)$ in the $g$-th mixture component into the product of the conditional distribution of $\bY \mid \bX = \bx$ and the marginal distribution of $\bX$, by assuming a parametric functional relationship for the expectation of $\bY \mid \bx$. 
This approach accounts for the dependency of $\bY$ in $\bX$ within each mixture component by allowing the mean, or more generally, some location parameter to depend on $\bx$ via some linear or nonlinear functional relationship. 
The possibility to specify different models for either $f(\by\mid\bx;\btheta_{\bY|g})$ or $f(\bx;\btheta_{\bX|g})$, as well as different linear/nonlinear dependencies for $\bY\mid\bX=\bx$ in each mixture component makes the CWM a very flexible modelling approach. 
The CWM is particularly useful in this context as part of the family of mixture models with random covariates \citep{hennig2000identifiablity}.  

\section{Methodological proposal} \label{Section Methodological}
In this paper, we focus on \eqref{eq general cwm} by assuming that both $f(\by|\bx;\btheta_{\bY|g})$ and $f(\bx;\btheta_{\bX|g})$ are either the SAL or cSAL in \eqref{pdf SAL} and \eqref{pdf csal}, respectively.
Thus, the SALCWM can be written as
\begin{align}\label{pdf asymmetric Laplace cwm}
p(\bx,\by;\bpi,\bmu,\bSigma,\balpha)=\sum_{g=1}^G\pi_gf_{\text{SAL}}(\by;\bmu_{\bY}(\bx;\bbeta_g),\bSigma_{\bY|g},\balpha_{\bY|g})f_{\text{SAL}}(\bx;\bmu_{\bX|g},\bSigma_{\bX|g},\balpha_{\bX|g}),
\end{align}
where $\pi_g$ are positive weights, $\bmu_{\bY}(\bx;\bbeta_g)=\mathrm{E}(\bY\mid\bx,\bbeta_g)=\bbeta_g^\top\bx^*$ denotes the local conditional mean of $\bY|\bX=\bx$, with $\bbeta_g$ being a vector of regression coefficients of dimension $(1+d_{\bX})\times d_{\bY}$ and $\bx^*=(1,\bx)$ to account for the intercepts.
By considering the cSAL distribution in \eqref{pdf csal} instead, the cSALCWM can be written as  
\begin{align}\label{pdf contaminated asymmetric Laplace cwm}
&p(\bx,\by;\bpi,\bmu,\bSigma,\balpha,\bdelta,\boldeta)\notag\\
&=\sum_{g=1}^G\pi_gf_{\text{cSAL}}(\by;\bmu_{\bY}(\bx;\bbeta_g),\bSigma_{\bY|g},\balpha_{\bY|g},\delta_{\bY|g},\eta_{\bY|g})f_{\text{cSAL}}(\bx;\bmu_{\bX|g},\bSigma_{\bX|g},\balpha_{\bX|g},\delta_{\bX|g},\eta_{\bX|g}).
\end{align}
Just as the cSAL distribution can be seen as a heavy-tailed extension of the SAL distribution, the cSALCWM similarly extends the SALCWM. 
As a result, the SALCWM emerges a limiting case of the cSALCWM if both $\delta_{\bX|g}$ and $\delta_{\bY|g}$ tend to $0^+$ or $1^-$, and $\eta_{\bX|g}$ and $\eta_{\bY|g}$ tend to $1^+$.

In regression analysis, atypical values can be distinguished between two types. 
Atypical observations in $\bY|\bx$ indicate model failure and are referred to as (vertical) outliers. 
Meanwhile, atypical observations with respect to $\bX$ are called leverage points. 
It is useful to distinguish between the two types of leverage points: good and bad.  
A bad leverage point is a regression outlier that has an $\bx$ value that is atypical among the values of $\bX$ as well. 
In contrast, a good leverage point is a point that is unusually large or small among the $\bX$ values but is not a regression outlier, i.e. $\bx$ is atypical, but the corresponding $\by$ fits the model quite well. 
A point like this is considered ``good" because it improves the precision of the regression coefficients. 
The cSALCWM allows us to classify each point $(\bx,\by)$ into one of the four categories presented in \tablename~\ref{table categorization of points in a regression analysis}, and we can do this separately for each cluster.

\begin{table}[!ht]
\caption{Categorization of points in a regression analysis.}
\centering
\begin{tabular}{lrr}
\toprule
   \diagbox{Atypical (on $\bY\mid\bx$)}{Leverage (on $\bX$)} & \multicolumn{1}{c}{No} & \multicolumn{1}{c}{Yes}  \\ \midrule
No &  Typical             & Good leverage                 \\
Yes  &   Outlier        &  Bad leverage           \\ \bottomrule
\end{tabular}
\label{table categorization of points in a regression analysis}
\end{table}
\subsection{Comparison with mixtures of Laplace regressions}\label{Sec comparison with mixtures of laplace regression}
The comparison involves mixtures of regression models with SAL errors (SALMRM) and cSAL errors (cSALMRM), which are respectively defined as
\begin{align}\label{eq mixtures of SAL regression}
    p(\by\mid\bx;\bpi,\bmu,\bSigma,\balpha)=\sum_{g=1}^G\pi_gf_{\text{SAL}}\left(\by;\bmu\left(\bx;\bbeta_g\right),\bSigma_{\bY|g},\balpha_{\bY|g}\right)
\end{align}
and
\begin{align}
\label{eq mixtures of cSAL regression}    p(\by\mid\bx;\bpi,\bmu,\bSigma,\balpha,\delta,\eta)=\sum_{g=1}^G\pi_gf_{\text{cSAL}}\left(\by;\bmu\left(\bx;\bbeta_g\right),\bSigma_{\bY|g},\balpha_{\bY|g},\delta_{\bY|g},\eta_{\bY|g}\right).
\end{align}
Both models in \eqref{eq mixtures of SAL regression} and \eqref{eq mixtures of cSAL regression} belong to the class of mixtures of regression models with fixed covariates. 
Consequently, they suffer from the assignment independence property -- that is, the group assignment probabilities do not depend on the covariates  $\bx$.  
Moreover, these models cannot be used to detect local leverage points.

It is important to note that comparisons between \eqref{eq mixtures of SAL regression} and \eqref{pdf asymmetric Laplace cwm}, or between \eqref{eq mixtures of cSAL regression} and \eqref{pdf contaminated asymmetric Laplace cwm}, are not direct. 
This stems from the fact that the SAL and cSAL regression mixtures model the conditional distribution $p(\by|\bx)$, whereas the CWMs model the joint distribution $p(\bx, \by)$. 
While the fixed-covariate regression models do not provide a model for the marginal distribution $p(\bx)$ — and thus cannot yield $p(\bx, \by)$ — it is still possible to derive the conditional distributions $p(\by|\bx)$ from the SALCWM and cSALCWM frameworks. 
For the cSALCWM, for example, by integrating out $\by$ from model \eqref{pdf contaminated asymmetric Laplace cwm} we obtain
\begin{align}\label{pdf mixture cSAL x only}
    p(\bx;\btheta)=\sum_{g=1}^G\pi_gf_{\text{cSAL}}\left(\bx;\bmu_{\bX|g},\bSigma_{\bX|g},\balpha_{\bX|g},\delta_{\bX|g},\eta_{\bX|g}\right),
\end{align}
which is a mixture of cSAL distributions for the $\bX$ only. The ratio of \eqref{pdf contaminated asymmetric Laplace cwm} over \eqref{pdf mixture cSAL x only} yields
\begin{align}\label{eq csal mixture of regression dynamic weights}
    p(\by|\bx;\btheta)=&\frac{1}{{\sum\limits_{j=1}^G}\pi_jf_{\text{cSAL}}\left(\bx;\bmu_{\bX|j},\bSigma_{\bX|j},\balpha_{\bX|j},\delta_{\bX|j},\eta_{\bX|j}\right)}\notag\\
    &\times\sum_{g=1}^G\pi_gf_{\text{cSAL}}\left(\bx;\bmu_{\bX|g},\bSigma_{\bX|g},\balpha_{\bX|g},\delta_{\bX|g},\eta_{\bX|g}\right)f_{\text{cSAL}}\left(\bx;\bmu\left(\bx;\bbeta_g\right),\bSigma_{\bY|g},\balpha_{\bY|g},\delta_{\bY|g},\eta_{\bY|g}\right),
\end{align}
which is the conditional distribution of $\bY|\bx$ from the cSALCWM and can be seen as a mixture of cSAL regression models with (dynamic) weights depending on $\bx$.
The following proposition shows that the family of mixtures of cSAL regression models can be seen as nested in the family of cSALCWM, as defined by \eqref{eq csal mixture of regression dynamic weights}. 
\begin{prop}\label{proposition mix of regression special case of CWM}
    If, in \eqref{eq csal mixture of regression dynamic weights}, $\bmu_{\bX|1}=\dots=\bmu_{\bX|G}=\bmu_{\bX}$, $\bSigma_{\bX|1}=\dots=\bSigma_{\bX|G}=\bSigma_{\bX}$,    $\balpha_{\bX|1}=\dots=\balpha_{\bX|G}=\balpha_{\bX}$,
$\delta_{\bX|1}=\dots=\delta_{\bX|G}=\delta_{\bX}$, and
$\eta_{\bX|1}=\dots=\eta_{\bX|G}=\eta_{\bX}$, then mixtures of the cSAL regression models as defined in \eqref{eq mixtures of cSAL regression}, can be seen as a particular case of the cSALCWM in \eqref{pdf contaminated asymmetric Laplace cwm}.
\end{prop}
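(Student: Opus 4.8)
The plan is to substitute the stated equality constraints directly into the representation \eqref{eq csal mixture of regression dynamic weights} and exhibit the covariate density as a common factor that cancels between numerator and denominator, leaving exactly \eqref{eq mixtures of cSAL regression}. The entire argument rests on one structural observation: once $\bmu_{\bX|g}$, $\bSigma_{\bX|g}$, $\balpha_{\bX|g}$, $\delta_{\bX|g}$, and $\eta_{\bX|g}$ cease to depend on $g$, the marginal covariate contribution $f_{\text{cSAL}}(\bx;\bmu_{\bX},\bSigma_{\bX},\balpha_{\bX},\delta_{\bX},\eta_{\bX})$ is the same for every mixture index and therefore behaves as a constant with respect to the sums over $g$ and $j$.

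First I would introduce the shorthand $f_{\bX}(\bx):=f_{\text{cSAL}}(\bx;\bmu_{\bX},\bSigma_{\bX},\balpha_{\bX},\delta_{\bX},\eta_{\bX})$ for this common covariate density under the imposed constraints. In the denominator of \eqref{eq csal mixture of regression dynamic weights}, each summand $\pi_j f_{\text{cSAL}}(\bx;\bmu_{\bX|j},\bSigma_{\bX|j},\balpha_{\bX|j},\delta_{\bX|j},\eta_{\bX|j})$ becomes $\pi_j f_{\bX}(\bx)$, so the denominator collapses to $f_{\bX}(\bx)\sum_{j=1}^G\pi_j$, which equals $f_{\bX}(\bx)$ since the weights sum to one. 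In the numerator, each term likewise carries the identical factor $f_{\bX}(\bx)$, which can be pulled outside the sum over $g$, leaving $f_{\bX}(\bx)\sum_{g=1}^G\pi_g f_{\text{cSAL}}(\by;\bmu(\bx;\bbeta_g),\bSigma_{\bY|g},\balpha_{\bY|g},\delta_{\bY|g},\eta_{\bY|g})$.

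Next I would cancel the common factor $f_{\bX}(\bx)$ between numerator and denominator, which is legitimate wherever $f_{\bX}(\bx)>0$; this holds throughout the support of the SAL and cSAL densities. What remains is precisely $\sum_{g=1}^G\pi_g f_{\text{cSAL}}(\by;\bmu(\bx;\bbeta_g),\bSigma_{\bY|g},\balpha_{\bY|g},\delta_{\bY|g},\eta_{\bY|g})$, i.e.\ the mixture of cSAL regression models in \eqref{eq mixtures of cSAL regression}. In particular, the dynamic, $\bx$-dependent weights of the general cSALCWM reduce to the static weights $\pi_g$, which is exactly the assignment-independence behaviour characterising the fixed-covariate regression mixtures.

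I do not anticipate a genuine obstacle: the result is an algebraic cancellation enabled by the constraints together with $\sum_{g=1}^G\pi_g=1$. The only point deserving a word of care is to verify that the common factor is nonzero before cancelling, so that the quotient in \eqref{eq csal mixture of regression dynamic weights} is well defined. I would also remark that the identical reduction applies verbatim to the non-contaminated case, recovering the SALMRM in \eqref{eq mixtures of SAL regression} from the SALCWM in \eqref{pdf asymmetric Laplace cwm} under the analogous equality constraints on the $\bX$-parameters.
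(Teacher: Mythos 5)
Your proof is correct and follows essentially the same route as the paper's: substituting the common covariate parameters, factoring $f_{\text{cSAL}}(\bx;\bmu_{\bX},\bSigma_{\bX},\balpha_{\bX},\delta_{\bX},\eta_{\bX})$ out of both sums, and cancelling it (with $\sum_g\pi_g=1$) to recover \eqref{eq mixtures of cSAL regression}. Your added remarks on positivity of the cancelled factor and on the analogous SAL case are minor refinements of what the paper states.
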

\begin{proof}
A proof of this proposition is given in \ref{Section appendix proof proposition 1}.    
\end{proof}
 Similarly, this can easily be shown for the case of \eqref{eq mixtures of SAL regression} and \eqref{pdf asymmetric Laplace cwm}.

\section{Identifiability}
\label{Section Identifiability}

An important point in dealing with the proposed SALCWM and cSALCWM is establishing their identifiability. 
Without identifiability, the parameters might not be estimated and interpreted, and, more generally, the inference might be meaningless. 
The nontrivial concept of identifiability for mixture models is defined by \citet{titterington1985statistical}, which is used in Proposition \ref{proposition identifiability} to show that the cSALCWM is identifiable (up to the label switching problem).
\begin{prop}\label{proposition identifiability}
Let 
\begin{align*}
p(\bx,\by;\btheta)=\sum_{g=1}^G\pi_gf_{\text{\normalfont cSAL}}(\by;\bmu_{\bY}(\bx;\bbeta_g),\bSigma_{\bY|g},\balpha_{\bY|g},\delta_{\bY|g},\eta_{\bY|g})f_{\text{\normalfont cSAL}}(\bx;\bmu_{\bX|g},\bSigma_{\bX|g},\balpha_{\bX|g},\delta_{\bX|g},\eta_{\bX|g})
\end{align*}
and
\begin{align*}
p(\bx,\by;\tilde\btheta)=\sum_{g=1}^{\tilde G}\pi_jf_{\text{\normalfont cSAL}}(\by;\bmu_{\bY}(\bx;\tilde\bbeta_j),\tilde\bSigma_{\bY|j},\tilde\balpha_{\bY|j},\tilde\delta_{\bY|j},\tilde\eta_{\bY|j})f_{\text{\normalfont cSAL}}(\bx;\tilde\bmu_{\bX|j},\tilde\bSigma_{\bX|j},\tilde\balpha_{\bX|j},\tilde\delta_{\bX|j},\tilde\eta_{\bX|j}).
\end{align*}
be two different parameterizations of the cSALCWM given in \eqref{pdf contaminated asymmetric Laplace cwm}, then the equality $p(\bx,\by,\btheta)\equiv p(\bx,\by,\tilde\btheta)$ implies that $G=\tilde G$ and also implies that for each $g\in\{1,\dots,G\}$ there exists an $j\in\{1,\dots,G\}$ such that $\pi_g=\tilde\pi_j,$ $\bmu_{\bX|g}=\tilde{\bmu}_{\bX|j},$ $ \bSigma_{\bX|g}=\tilde{\bSigma}_{\bX|j}, $ $ \balpha_{\bX|g}=\tilde{\balpha}_{\bX|j},$  $ \delta_{\bX|g}=\tilde\delta_{\bX|j}, $ $ \eta_{\bX|g}=\tilde\eta_{\bX|j}$,  $\bbeta_g=\tilde{\bbeta}_j,$ $\bSigma_{\bY|g}=\tilde{\bSigma}_{\bY|j},$ $\balpha_{\bY|g}=\tilde{\balpha}_{\bY|j},$ $\delta_{\bY|g}=\tilde\delta_{\bY|j},$ $\eta_{\bY|g}=\tilde\eta_{\bY|j}$.
\end{prop}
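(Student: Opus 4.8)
The plan is to exploit the product structure of each mixture component, reducing the joint identifiability to two building blocks: identifiability of finite mixtures of cSAL distributions in the covariate space, and identifiability of the conditional (regression) mixtures in the response space. Concretely, I would first integrate out $\by$ on both sides of $p(\bx,\by;\btheta)\equiv p(\bx,\by;\tilde\btheta)$. By the same computation leading to \eqref{pdf mixture cSAL x only}, this yields the equality of two finite mixtures of cSAL densities in $\bx$,
\begin{equation*}
\sum_{g=1}^G \pi_g f_{\text{cSAL}}(\bx;\bmu_{\bX|g},\bSigma_{\bX|g},\balpha_{\bX|g},\delta_{\bX|g},\eta_{\bX|g}) \equiv \sum_{j=1}^{\tilde G} \tilde\pi_j f_{\text{cSAL}}(\bx;\tilde\bmu_{\bX|j},\tilde\bSigma_{\bX|j},\tilde\balpha_{\bX|j},\tilde\delta_{\bX|j},\tilde\eta_{\bX|j}),
\end{equation*}
holding for all $\bx$. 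I would then invoke identifiability of finite mixtures of cSAL distributions as the first building block. Since each cSAL is itself a two-component SAL mixture whose constituents share the mode $\bmu$ but are ordered by the scale-inflation constraint $\eta>1$ (which breaks the reference/contaminant label ambiguity), this reduces to identifiability of finite mixtures of SAL distributions. The latter I would establish by a characteristic-function argument: the SAL characteristic function, obtained from the normal variance--mean mixture representation \eqref{eq AL mixture representation}, places the SAL family in a class satisfying Teicher-type linear-independence conditions, so its finite mixtures are identifiable. This yields $G=\tilde G$ and, up to a permutation $\sigma$, the matching of $\pi_g=\tilde\pi_{\sigma(g)}$ together with all covariate parameters.

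Next I would recover the conditional parameters. After relabelling by $\sigma$ so that the weights and covariate densities coincide termwise, I would substitute back into $p(\bx,\by;\btheta)\equiv p(\bx,\by;\tilde\btheta)$ and cancel, obtaining for (almost) every fixed $\bx$ an equality of two finite mixtures in $\by$ with identical positive mixing weights $w_g(\bx)=\pi_g f_{\text{cSAL}}(\bx;\bmu_{\bX|g},\bSigma_{\bX|g},\balpha_{\bX|g},\delta_{\bX|g},\eta_{\bX|g})$. Applying identifiability of mixtures of cSAL in the response space then forces the conditional cSAL components to coincide for each $\bx$; in particular $\bSigma_{\bY|g}=\tilde\bSigma_{\bY|g}$, $\balpha_{\bY|g}=\tilde\balpha_{\bY|g}$, $\delta_{\bY|g}=\tilde\delta_{\bY|g}$, $\eta_{\bY|g}=\tilde\eta_{\bY|g}$, and the location parameters match, $\bbeta_g^\top\bx^*=\tilde\bbeta_g^\top\bx^*$. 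Letting $\bx$ range over a set with nonempty interior, the vectors $\bx^*=(1,\bx)$ span $\mathbb{R}^{1+d_{\bX}}$, whence $\bbeta_g=\tilde\bbeta_g$, completing the termwise matching of all parameters.

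The main obstacle is twofold. The genuinely technical heart is the building-block identifiability of mixtures of SAL distributions, since the SAL density involves a modified Bessel function and is not covered by elementary identifiability criteria; this must be settled through the characteristic-function route (or by appeal to an established identifiability result for normal variance--mean mixtures). The second, more delicate point is the coincidence case in the marginal reduction: if two distinct components happened to share identical covariate parameters $\bmu_{\bX|g},\bSigma_{\bX|g},\balpha_{\bX|g},\delta_{\bX|g},\eta_{\bX|g}$, integrating out $\by$ would merge them, so the marginal mixture alone cannot separate their weights. I would resolve this by retaining the joint structure rather than discarding it: for a fixed $\bx$ the conditional mixture in $\by$ still distinguishes such components through their differing regression parts, so applying the conditional-mixture identifiability with the merged covariate weight and then re-splitting restores the full componentwise matching. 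Verifying that this re-splitting is consistent with a single global permutation $\sigma$ across both the covariate and response parameters is the step requiring the most care.
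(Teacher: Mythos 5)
Your primary route has a genuine gap, and it is precisely the one you flag at the end: you apply finite-mixture identifiability to the marginal in $\bx$ obtained by integrating out $\by$, and conclude $G=\tilde G$ together with a termwise matching of $\pi_g$ and the covariate parameters. Nothing in the proposition's hypotheses forces the tuples $(\bmu_{\bX|g},\bSigma_{\bX|g},\balpha_{\bX|g},\delta_{\bX|g},\eta_{\bX|g})$ to be pairwise distinct across components, and in the most important special case — the assignment-independence setting of Proposition~\ref{proposition mix of regression special case of CWM}, where all $G$ components share the same covariate distribution — the marginal in $\bx$ collapses to a \emph{single} cSAL density, carrying no information whatsoever about $G$ or the weights. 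So the marginal-first reduction cannot get started there; in general it matches only merged clusters, and your subsequent step (``relabel by $\sigma$ so that the weights and covariate densities coincide termwise, substitute back and cancel'') has no termwise matching to relabel. Your proposed repair — retain the joint structure and let the conditional mixture in $\by$ separate components that the marginal merges — is sound, but as written it is a sketch whose remaining burden (a single global permutation consistent across both spaces) is exactly the original problem, not a smaller one.

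The decisive observation is that your repair, once carried out, \emph{is} the paper's proof, and it renders your first step superfluous. The paper never invokes mixture identifiability in the covariate space at all: the marginal equality is used only to form the conditional $p(\by\mid\bx)$, whose component weights are the dynamic quantities $\pi_g f_{\text{cSAL}}(\bx;\bmu_{\bX|g},\bSigma_{\bX|g},\balpha_{\bX|g},\delta_{\bX|g},\eta_{\bX|g})\big/p(\bx)$. On the set $\chi$ (the complement of finitely many hyperplanes, where distinct $\bbeta$'s yield distinct locations $\bbeta_g^\top\bx^*$), the conditional is for each fixed $\bx$ a mixture of $d_{\bY}$-variate cSAL densities with pairwise distinct components, so identifiability of multivariate cSAL mixtures \citep{tortora2024laplace} gives $G=\tilde G$, the matching of all response-side parameters, \emph{and} the pointwise equality of the dynamic weights; combining the latter with the marginal equality gives $\pi_g f_{\text{cSAL}}(\bx;\bmu_{\bX|g},\bSigma_{\bX|g},\balpha_{\bX|g},\delta_{\bX|g},\eta_{\bX|g})=\tilde\pi_j f_{\text{cSAL}}(\bx;\tilde\bmu_{\bX|j},\tilde\bSigma_{\bX|j},\tilde\balpha_{\bX|j},\tilde\delta_{\bX|j},\tilde\eta_{\bX|j})$ on $\chi$, integration over $\chi$ yields $\pi_g=\tilde\pi_j$, and the covariate parameters then follow from identifiability of the single cSAL density family — not of its mixtures. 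Consequently, the characteristic-function program you outline for SAL-mixture identifiability in the covariate space is not needed: the only mixture-identifiability input is in the response space, where it is available as a cited result. Ordering the argument conditional-first is not a patch for a corner case; it is what makes the proof work uniformly, including the assignment-independence case where your ordering fails.
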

\begin{proof}
    A proof of this proposition is given in \ref{Section Appendix identifiability}.
\end{proof}
Since the SAL distribution is a special case of the generalized hyperbolic (GH) distribution \citet{gallaugher2022multivariate} and \citep{bagnato2024generalized} proved that the GHCWM is identifiable (up to label switching), it follows that the SALCWM in \eqref{pdf asymmetric Laplace cwm} is also identifiable.

\section{Maximum likelihood estimation}\label{Sec maximum likelihood estimation}
In this section, we present an expectation-conditional maximization (ECM) algorithm of \citet{meng1993maximum} in Section \ref{section ecm algorithm for cSALCWM} for maximum likelihood (ML) estimation of the parameters of the more general cSALCWM in \eqref{pdf contaminated asymmetric Laplace cwm}, and an expectation maximization (EM) algorithm \citep{dempster1977maximum} is presented in Section \ref{Section EM algorithm for SALCWM} of the SALCWM given in \eqref{pdf asymmetric Laplace cwm}. 

\subsection{An ECM algorithm for the cSALCWM}\label{section ecm algorithm for cSALCWM}
Let $(\bx_1,\by_1),\dots,(\bx_n,\by_n)$ be an observed sample from the cSALCWM \eqref{pdf contaminated asymmetric Laplace cwm}. 
For the application of the ECM algorithm, it is convenient to view the observed data as incomplete. 
The sources of incompleteness in this context can be categorized into three distinct types. 
The first source, which is a classical issue in the use of mixture models, arises from the uncertainty regarding the component membership of each observation. 
This uncertainty is represented by an indicator vector $\bz_i = (z_{i1}, \dots, z_{iG})$, where $z_{ig} = 1$ if the observation $(\bx_i, \by_i)$ belongs to component $g$, and $z_{ig} = 0$ otherwise.

The second and third sources of incompleteness are specific to the model under consideration. 
These arise from the fact that, for each observation, it is unknown whether it is an outlier and/or a leverage point with respect to a given component $g$. 
This is denoted by the indicator vector $\bu_i = (u_{i1}, \dots, u_{iG})$, where $u_{ig} = 0$ if $(\bx_i, \by_i)$ is not an outlier in component $g$, and $u_{ig} = 1$ otherwise. 
Similarly, the indicator vector $\bv_i = (v_{i1}, \dots, v_{iG})$ is defined, where $v_{ig} = 0$ if $(\bx_i, \by_i)$ is not a leverage point in component $g$, and $v_{ig} = 1$ otherwise.

The complete-data are thus given by $(\bx_1,\by_1,\bz_1,\bu_1,\bv_1),\dots,(\bx_n,\by_n,\bz_n,\bu_n,\bv_n)$ and, from \eqref{pdf contaminated asymmetric Laplace cwm} (and, inherently, \eqref{pdf csal}), the complete-data likelihood function can be written as
\begin{align}\label{eq likelihood}
    L_c(\bpi,\bmu,\bSigma,\balpha,\bdelta,\boldeta)=&\prod_{i=1}^n\prod_{g=1}^G\left\{\pi_g
\left[\left(1-\delta_{\bX|g}\right)f_{\text{SAL}}\left(\bx_i;\bmu_{\bX|g},\bSigma_{\bX|g},\balpha_{\bX|g}\right)\right]^{1-v_{ig}}\right.\notag\\
   &\left. \times\left[\delta_{\bX|g}f_{\text{SAL}}\left(\bx_i;\bmu_{\bX|g},\eta_{\bX|g}\bSigma_{\bX|g},\sqrt{\eta_{\bX|g}}\balpha_{\bX|g}\right)\right]^{v_{ig}}\right.\notag\\
   &\left. \times \left[\left(1-\delta_{\bY|g}\right)f_{\text{SAL}}\left(\by_i;\bmu_{\bY}(\bx_i;\bbeta_g),\bSigma_{\bY|g},\balpha_{\bY|g}\right)\right]^{1-u_{ig}}\right.\notag\\
  &\left. \times\left[\delta_{\bY|g}f_{\text{SAL}}\left(\by_i;\bmu_{\bY}(\bx_i;\bbeta_g),\eta_{\bY|g}\bSigma_{\bY|g},\sqrt{\eta_{\bY|g}}\balpha_{\bY|g}\right)\right]^{u_{ig}}
    \right\}^{z_{ig}}.
\end{align}
The complete-data log-likelihood then follows from \eqref{eq likelihood} as
\begin{align}\label{eq complete log likelihood}
    l_c(\bpi,\bmu,\bSigma,\balpha,\bdelta,\boldeta)=&l_{1c}(\bpi)+l_{2c}(\bdelta_{\bX})+l_{3c}^{\text{good}}(\bmu_{\bX},\bSigma_{\bX},\balpha_{\bX})+l_{3c}^{\text{bad}}(\bmu_{\bX},\bSigma_{\bX},\balpha_{\bX},\boldeta_{\bX})+l_{4c}(\bdelta_{\bY})\notag\\
    &+l_{5c}^{\text{good}}(\bbeta,\bSigma_{\bY},\balpha_{\bY})+l_{5c}^{\text{bad}}(\bbeta,\bSigma_{\bY},\balpha_{\bY},\boldeta_{\bY}),
\end{align}
where $\bpi=(\pi_1,\dots,\pi_G)$, $\bmu_{\bX}=\left(\bmu_{\bX|1},\dots,\bmu_{\bX|G}\right)$, $\bSigma_{\bX}=\left(\bSigma_{\bX|1},\dots,\bSigma_{\bX|G}\right)$,
$\balpha_{\bX}=\left(\alpha_{\bX|1},\dots,\alpha_{\bX|G}\right)$,
$\bdelta_{\bX}=\left(\delta_{\bX|1},\dots,\delta_{\bX|G}\right)$,
$\boldeta_{\bX}=\left(\eta_{\bX|1},\dots,\eta_{\bX|G}\right)$,
$\bbeta=\left(\bbeta_1,\dots,\bbeta_G\right)$,
$\bSigma_{\bY}=\left(\bSigma_{\bY|1},\dots,\bSigma_{\bY|G}\right)$,
$\balpha_{\bY}=\left(\alpha_{\bY|1},\dots,\alpha_{\bY|G}\right)$,
$\bdelta_{\bY}=\left(\delta_{\bY|1},\dots,\delta_{\bY|G}\right)$,
$\boldeta_{\bY}=\left(\eta_{\bY|1},\dots,\eta_{\bY|G}\right)$,
\allowdisplaybreaks
\begin{align*}
   l_{1c}(\bpi)&=\sum_{i=1}^n\sum_{g=1}^Gz_{ig}\mathrm{ln}\pi_g,\\
    l_{2c}(\bdelta_{\bX})&=\sum_{i=1}^n\sum_{g=1}^Gz_{ig}\left[\left(1-v_{ig}\right)\mathrm{ln}\left(1-\delta_{\bX|g}\right)+v_{ig}\mathrm{ln}\delta_{\bX|g}\right],\\
    l_{3c}^{\text{good}}(\bmu_{\bX},\bSigma_{\bX},\balpha_{\bX})&=\sum_{i=1}^n\sum_{g=1}^Gz_{ig}\left(1-v_{ig}\right)\mathrm{ln}f_{\text{SAL}}\left(\bx_i;\bmu_{\bX|g},\bSigma_{\bX|g},\balpha_{\bX|g}\right),\\
    l_{3c}^{\text{bad}}(\bmu_{\bX},\bSigma_{\bX},\balpha_{\bX},\boldeta_{\bX})&=\sum_{i=1}^n\sum_{g=1}^Gz_{ig}v_{ig}\mathrm{ln}f_{\text{SAL}}\left(\bx_i;\bmu_{\bX|g},\eta_{\bX|g}\bSigma_{\bX|g},\sqrt{\eta_{\bX|g}}\balpha_{\bX|g}\right),\\
    l_{4c}(\bdelta_{\bY})&=\sum_{i=1}^n\sum_{g=1}^Gz_{ig}\left[\left(1-u_{ig}\right)\mathrm{ln}\left(1-\delta_{\bY|g}\right)+u_{ig}\mathrm{ln}\delta_{\bY|g}\right],\\
    l_{5c}^{\text{good}}(\bbeta,\bSigma_{\bY},\balpha_{\bY})&=\sum_{i=1}^n\sum_{g=1}^Gz_{ig}\left(1-u_{ig}\right)\mathrm{ln}f_{\text{SAL}}\left(\by_i;\bmu_{\bY}(\bx_i;\bbeta_g),\bSigma_{\bY|g},\balpha_{\bY|g}\right),\\
    l_{5c}^{\text{bad}}(\bbeta,\bSigma_{\bY},\balpha_{\bY},\boldeta_{\bY})&=\sum_{i=1}^n\sum_{g=1}^Gz_{ig}u_{ig}\mathrm{ln}f_{\text{SAL}}\left(\by_i;\bmu_{\bY}(\bx_i;\bbeta_g),\eta_{\bY|g}\bSigma_{\bY|g},\sqrt{\eta_{\bY|g}}\balpha_{\bY|g}\right).
\end{align*}
Computationally, it is more efficient to use the relationship between the SAL and the multivariate normal distribution outlined in \eqref{eq AL mixture representation} to rewrite $l_{c3}^{\text{good}}$, $l_{c3}^{\text{bad}}$, $l_{c5}^{\text{good}}$, and $l_{c5}^{\text{bad}}$ 
as 
\begin{align}
    l_{3c}^{\text{good}}(\bmu_{\bX},\bSigma_{\bX},\balpha_{\bX})&=\sum_{i=1}^n\sum_{g=1}^Gz_{ig}\left(1-v_{ig}\right)\mathrm{ln}\left[f_{\text{N}}\left(\bx_i;\bmu_{\bX|g}+w_{ig}\balpha_{\bX|g},w_{ig}\bSigma_{\bX|g}\right)f_{\text{Exp}}(w_{ig};1)\right],\\
    l_{3c}^{\text{bad}}(\bmu_{\bX},\bSigma_{\bX},\balpha_{\bX},\boldeta_{\bX})&=\sum_{i=1}^n\sum_{g=1}^Gz_{ig}v_{ig}\mathrm{ln}\left[f_{\text{N}}\left(\bx_i;\bmu_{\bX|g}+w_{ig}\sqrt{\eta_{\bX|g}}\balpha_{\bX|g},w_{ig}\eta_{\bX|g}\bSigma_{\bX|g}\right)f_{\text{Exp}}(w_{ig};1)\right],\\
    l_{5c}^{\text{good}}(\bbeta,\bSigma_{\bY},\balpha_{\bY})&=\sum_{i=1}^n\sum_{g=1}^Gz_{ig}\left(1-u_{ig}\right)\mathrm{ln}\left[f_{\text{N}}\left(\by_i;\bmu_{\bY}(\bx_i;\bbeta_g)+w_{ig}\balpha_{\bY|g},w_{ig}\bSigma_{\bY|g}\right)f_{\text{Exp}}(w_{ig};1)\right],\\
    l_{5c}^{\text{bad}}(\bbeta,\bSigma_{\bY},\balpha_{\bY},\boldeta_{\bY})&=\sum_{i=1}^n\sum_{g=1}^Gz_{ig}u_{ig}\mathrm{ln}\left[f_{\text{N}}\left(\by_i;\bmu_{\bY}(\bx_i;\bbeta_g)+w_{ig}\sqrt{\eta_{\bY|g}}\balpha_{\bY|g},w_{ig}\eta_{\bY|g}\bSigma_{\bY|g}\right)f_{\text{Exp}}(w_{ig};1)\right].
\end{align}
where $f_{\text{Exp}}(\cdot,1)$ denotes the PDF of an exponential distribution with rate 1, i.e., $f_{\text{Exp}}(\cdot,1)=e^{-w_{ig}}, w_{ig}>0$.

The ECM algorithm iterates between three steps, an E step and two CM steps, until convergence. 
The only difference from the EM algorithm is that each M-step is replaced by two simpler CM-steps. 
They arise from the partition $\btheta=(\btheta_1,\btheta_2)$, where $\btheta_1= (\bpi,\bmu_{\bX},\bSigma_{\bX},\balpha_{\bX},\bbeta,\bSigma_{\bY},\balpha_{\bY})$ and $\btheta_2=(\boldeta_{\bX},\boldeta_{\bY})$.
\subsubsection*{E-step}
In the E-step, we compute the conditional expectation of the complete-data log-likelihood function, denoted as 
$Q\left(\btheta|\btheta^{(r)}\right)$ 
for the $(r+1)$-th iteration, where 
\begin{align}     Q\left(\btheta|\btheta^{(r)}\right)=&Q_{1}\left(\bpi|\btheta^{(r)}\right)+Q_{2}\left(\bdelta_{\bX}|\btheta^{(r)}\right)+Q_{3}^{\text{good}}\left(\bmu_{\bX},\bSigma_{\bX},\balpha_{\bX}|\btheta^{(r)}\right)+Q_{3}^{\text{bad}}\left(\bmu_{\bX},\bSigma_{\bX},\balpha_{\bX},\boldeta_{\bX}|\btheta^{(r)}\right)\notag\\
&+Q_{4}\left(\bdelta_{\bY}|\btheta^{(r)}\right)+Q_{5}^{\text{good}}\left(\bbeta,\bSigma_{\bY},\balpha_{\bY}|\btheta^{(r)}\right)+Q_{5}^{\text{bad}}\left(\bbeta,\bSigma_{\bY},\balpha_{\bY},\boldeta_{\bY}|\btheta^{(r)}\right)
\end{align} which is in the same order as \eqref{eq complete log likelihood}, where
\begin{align}
Q_1\left(\bpi|\theta^{(r)}\right)=&\sum_{i=1}^n z_{ig}^{(r)}\mathrm{ln}\pi_g,\label{eq Q1}\\
Q_{2}\left(\bdelta_{\bX}|\btheta^{(r)}\right)=& \sum^{n}_{i=1}\sum^{G}_{g=1} z_{ig}^{(r)}\left[\left(1-v_{ig}^{(r)}\right)\mathrm{ln}\left(1-\delta_{\bX|g}\right)+v_{ig}^{(r)}\mathrm{ln}\delta_{\bX|g}\right],\label{eq Q2}\\
Q_3^{\text{good}}\left(\bmu_{\bX},\bSigma_{\bX},\balpha_{\bX}\right)=&-\frac{nd_{\bX}}{2}\mathrm{ln}(2\pi)-\frac{1}{2}\sum^n_{i=1}\sum^G_{g=1}z_{ig}^{(r)}\left(1-v_{ig}^{(r)}\right)\mathrm{ln}\left|\bSigma_{\bX|g}\right|-\frac{d_{\bX}}{2}\sum_{i=1}^n\sum_{g=1}^Gz_{ig}^{(r)}\left(1-v_{ig}^{(r)}\right)\mathrm{E}_{{3_{\bX|ig}}}^{(r)}\notag\\
    &-\frac{1}{2}\sum_{i=1}^n\sum_{g=1}^Gz_{ig}^{(r)}\left(1-v_{ig}^{(r)}\right)\mathrm{E}_{2_{\bX|ig}}^{(r)}\left[(\bx_i-\bmu_{\bX|g})^\top\bSigma^{-1}_{\bX|g}(\bx_i-\bmu_{\bX|g})\right]\notag\\
&+\sum_{i=1}^n\sum_{g=1}^Gz_{ig}^{(r)}\left(1-v_{ig}^{(r)}\right)\left[(\bx_i-\bmu_{\bX|g})^\top\bSigma^{-1}_{\bX|g}\balpha_{\bX|g}\right]\notag\\
&-\frac{1}{2}\sum_{i=1}^n\sum_{g=1}^Gz_{ig}^{(r)}\left(1-v_{ig}^{(r)}\right)\mathrm{E}_{1_{\bX|ig}}^{(r)}\balpha^\top_{\bX|g}\bSigma^{-1}_{\bX|g}\balpha_{\bX|g}-\sum_{i=1}^n\sum_{g=1}^Gz_{ig}^{(r)}\left(1-v_{ig}^{(r)}\right)\mathrm{E}_{1_{\bX|ig}}^{(r)},\\
Q_3^{\text{bad}}\left(\bmu_{\bX},\bSigma_{\bX},\balpha_{\bX},\boldeta_{\bX}\right)=&-\frac{nd_{\bX}}{2}\mathrm{ln}(2\pi)-\frac{1}{2}\sum^n_{i=1}\sum^G_{g=1}z_{ig}^{(r)}v_{ig}^{(r)}\mathrm{ln}\left|\bSigma_{\bX|g}\right| -\frac{d_{\bX}}{2}\sum^n_{i=1}\sum^G_{g=1}z_{ig}^{(r)}v_{ig}^{(r)}\mathrm{ln}\eta_{\bX|g}\notag\\
    &-\frac{d_{\bX}}{2}\sum_{i=1}^n\sum_{g=1}^Gz_{ig}^{(r)}v_{ig}^{(r)}\mathrm{\tilde{E}}_{3_{\bX|ig}}^{{(r)}}
    -\frac{1}{2}\sum_{i=1}^n\sum_{g=1}^Gz_{ig}^{(r)}v_{ig}^{(r)}\mathrm{\tilde{E}}_{2_{\bX|ig}}^{(r)}\frac{1}{\eta_{\bX|g}}(\bx_i-\bmu_{\bX|g})^\top\bSigma^{-1}_{\bX|g}(\bx_i-\bmu_{\bX|g})\notag\\
&+\sum_{i=1}^n\sum_{g=1}^Gz_{ig}^{(r)}v_{ig}^{(r)}\frac{1}{\sqrt{\eta_{\bX|g}}}(\bx_i-\bmu_{\bX|g})^\top\bSigma^{-1}_{\bX|g}\balpha_{\bX|g}\notag\\
&-\frac{1}{2}\sum_{i=1}^n\sum_{g=1}^Gz_{ig}^{(r)}v_{ig}^{(r)}\mathrm{\tilde{E}}_{1_{\bX|ig}}^{(r)}\balpha^\top_{\bX|g}\bSigma^{-1}_{\bX|g}\balpha_{\bX|g}-\sum_{i=1}^n\sum_{g=1}^Gz_{ig}^{(r)}v_{ig}^{(r)}\mathrm{\tilde{E}}_{1_{\bX|ig}}^{(r)},\label{eq Q3 bad}\\
Q_{4}\left(\bdelta_{\bY}|\btheta^{(r)}\right)=& \sum^{n}_{i=1}\sum^{G}_{g=1} z_{ig}^{(r)}\left[\left(1-u_{ig}^{(r)}\right)\mathrm{ln}\left(1-\delta_{\bY|g}\right)+u_{ig}^{(r)}\mathrm{ln}\delta_{\bY|g}\right],\label{eq Q4}\\ Q_5^{\text{good}}\left(\bbeta,\bSigma_{\bY},\balpha_{\bY}\right)=&\frac{nd_{\bY}}{2}\mathrm{ln}(2\pi)-\frac{1}{2}\sum^n_{i=1}\sum^G_{g=1}z_{ig}^{(r)}\left(1-u_{ig}^{(r)}\right)\mathrm{ln}\left|\bSigma_{\bY|g}\right|-\frac{d_{\bY}}{2}\sum_{i=1}^n\sum_{g=1}^Gz_{ig}^{(r)}\left(1-u_{ig}^{(r)}\right)\mathrm{E}_{3_{\bY|ig}}^{{(r)}}\notag\\
    &-\frac{1}{2}\sum_{i=1}^n\sum_{g=1}^Gz_{ig}^{(r)}\left(1-u_{ig}^{(r)}\right)\mathrm{E}_{2_{\bY|ig}}^{(r)}(\by_i-\bmu_{\bY}(\bx_i;\bbeta_g))^\top\bSigma^{-1}_{\bY|g}\left(\by_i-\bmu_{\bY}(\bx_i;\bbeta_g)\right)\notag\\
&+\sum_{i=1}^n\sum_{g=1}^Gz_{ig}^{(r)}\left(1-u_{ig}^{(r)}\right)(\bx_i-\bmu_{\bY}(\bx_i;\bbeta_g))^\top\bSigma^{-1}_{\bY|g}\balpha_{\bY|g}\notag\\
&-\frac{1}{2}\sum_{i=1}^n\sum_{g=1}^Gz_{ig}^{(r)}\left(1-u_{ig}^{(r)}\right)\mathrm{E}_{1_{\bY|ig}}^{(r)}\balpha^\top_{\bY|g}\bSigma^{-1}_{\bY|g}\balpha_{\bY|g}-\sum_{i=1}^n\sum_{g=1}^Gz_{ig}^{(r)}\left(1-v_{ig}^{(r)}\right)\mathrm{E}_{1_{\bY|ig}}^{(r)},\\
Q_5^{\text{bad}}\left(\bbeta,\bSigma_{\bY},\balpha_{\bY},\boldeta_{\bY}\right)=&\frac{nd_{\bY}}{2}\mathrm{ln}(2\pi)-\frac{1}{2}\sum^n_{i=1}\sum^G_{g=1}z_{ig}^{(r)}u_{ig}^{(r)}\mathrm{ln}\left|\bSigma_{\bY|g}\right| -\frac{d_{\bY}}{2}\sum^n_{i=1}\sum^G_{g=1}z_{ig}^{(r)}u_{ig}^{(r)}\mathrm{ln}\eta_{\bY|g}\notag\\
    &-\frac{d_{\bX}}{2}\sum_{i=1}^n\sum_{g=1}^Gz_{ig}^{(r)}v_{ig}^{(r)}\mathrm{\tilde{E}}_{3_{\bY|ig}}^{{(r)}}\notag\\
    &-\frac{1}{2}\sum_{i=1}^n\sum_{g=1}^Gz_{ig}^{(r)}u_{ig}^{(r)}\mathrm{\tilde{E}}_{2_{\bY|ig}}^{(r)}\frac{1}{\eta_{\bY|g}}(\by_i-\bmu_{\bY}(\bx_i;\bbeta_g))^\top\bSigma^{-1}_{\bY|g}(\bx_i-\bmu_{\bY}(\bx_i;\bbeta_g))\notag\\
&+\sum_{i=1}^n\sum_{g=1}^Gz_{ig}^{(r)}u_{ig}^{(r)}\frac{1}{\sqrt{\eta_{\bY|g}}}(\by_i-\bmu_{\bY}(\bx_i;\bbeta_g))^\top\bSigma^{-1}_{\bY|g}\balpha_{\bY|g}\notag\\
&-\frac{1}{2}\sum_{i=1}^n\sum_{g=1}^Gz_{ig}^{(r)}u_{ig}^{(r)}\mathrm{\tilde{E}}_{1_{\bY|ig}}^{(r)}\balpha^\top_{\bY|g}\bSigma^{-1}_{\bY|g}\balpha_{\bY|g}-\sum_{i=1}^n\sum_{g=1}^Gz_{ig}^{(r)}u_{ig}^{(r)}\mathrm{\tilde{E}}_{1_{\bY|ig}}^{(r)}\label{eq Q5}.
\end{align}

To do this we need to calculate $\mathrm{E}_{\btheta^{(r)}}\left(Z_{ig}|\bx_i,\by_i\right)$, $\mathrm{E}_{\btheta^{(r)}}\left(V_{ig}|\bx_i,\bz_i\right)$,  $\mathrm{E}_{\btheta^{(r)}}\left(U_{ig}|\by_i,\bz_i\right)$, $\mathrm{E}_{1_{\bX|ig}}$, $\mathrm{E}_{2_{\bX|ig}}$, $\mathrm{\tilde{E}}_{1_{\bX|ig}}$, $\mathrm{\tilde{E}}_{2_{\bX|ig}}$, 
 $\mathrm{E}_{1_{\bY|ig}}$, $\mathrm{E}_{2_{\bY|ig}}$, $\mathrm{\tilde{E}}_{1_{\bY|ig}}$, $\mathrm{\tilde{E}}_{2_{\bY|ig}}$,
 $i=1,\dots, n$ and $g=1,\dots,G$. Note that the terms on the $Q$-function where $\mathrm{E}_{3_{\bX|ig}}$, $\mathrm{\tilde{E}}_{3_{\bX|ig}}$,
$\mathrm{E}_{3_{\bY|ig}}$, 
and $\mathrm{\tilde{E}}_{3_{\bY|ig}}$ are constant with respect to the model parameters $\btheta$; therefore, they are not required in our calculations but we provide them nevertheless for completeness. The updates are calculated as
\begin{align}
    \mathrm{E}_{\btheta^{(r)}}\left(Z_{ig}|\bx_i,\by_i\right)
    &=\frac{\pi_g^{(r)}f_{\text{cSAL}}\left(\by_i;\bmu_{\bY}\left(\bx_i;\bbeta_g^{(r)}\right),\bSigma_{\bY|g}^{(r)},\balpha_{\bY|g}^{(r)},\delta_{\bY|g}^{(r)},\eta_{\bY|g}^{(r)}\right)f_{\text{cSAL}}\left(\bx_i;\bmu_{\bX|g}^{(r)},\bSigma_{\bX|g}^{(r)},\balpha_{\bX|g}^{(r)},\delta_{\bX|g}^{(r)},\eta_{\bX|g}^{(r)}\right)}{p\left(\bx_i,\by_i;\btheta^{(r)}\right)}\notag\\
    &:=z_{ig}^{(r)},\\
  \mathrm{E}_{\btheta^{(r)}}\left(V_{ig}|\bx_i,\bz_i\right)&=\frac{\left(1-\delta_{\bX|g}^{(r)}\right)f_{\text{SAL}}\left(\bx_i;\bmu_{\bX|g}^{(r)},\bSigma_{\bX|g}^{(r)},\balpha_{\bX|g}^{(r)}\right)}{f_{\text{cSAL}}\left(\bx_i;\bmu_{\bX|g}^{(r)},\bSigma_{\bX|g}^{(r)},\balpha_{\bX|g}^{(r)},\delta_{\bX|g}^{(r)},\eta_{\bX|g}^{(r)}\right)} := v_{ig}^{(r)},\\
    \mathrm{E}_{\btheta^{(r)}}\left(U_{ig}|\by_i,\bz_i\right)&=\frac{\left(1-\delta_{\bY|g}^{(r)}\right)f_{\text{cSAL}}\left(\by_i;\bmu_{\bY}\left(\bx_i;\bbeta_g^{(r)}\right),\bSigma_{\bY|g}^{(r)},\balpha_{\bY|g}^{(r)}\right)}{f_{\text{cSAL}}\left(\by_i;\bmu_{\bY}\left(\bx_i;\bbeta_g^{(r)}\right),\bSigma_{\bY|g}^{(r)},\balpha_{\bY|g}^{(r)},\delta_{\bY|g}^{(r)},\eta_{\bY|g}^{(r)}\right)}:=u_{ig}^{(r)}.
\end{align}
Since $W_{ig}|\bx_i,z_{ig}=1\sim \mathcal{GIG}(a_g,b_{ig},v)$ and $\tilde{W}_{ig}|\bx_i,z_{ig}=1\sim \mathcal{GIG}(a_g,c_{ig},v)$ where $\mathcal{GIG}(a,b,v)$ denotes the generalized inverse Gaussian (GIG) distribution with parameters  $a>0,b>0$ and $v\in \mathcal{R}$, $a_{\bX|g}^{(r)}=2+\left(\balpha_{\bX|g}^{(r)}\right)^\top\left(\bSigma_{\bX|g}^{(r)}\right)^{-1}\balpha_{\bX|g}^{(r)}$, $b_{\bX|ig}^{(r)}=\left(\bx_i-\bmu_{\bX|g}^{(r)}\right)^\top\left(\bSigma_{\bX|g}^{(r)}\right)^{-1}\left(\bx_i-\bmu_{\bX|g}^{(r)}\right)$
and \\$c_{\bX|ig}^{(r)}=\left(\bx_i-\bmu_{\bX|g}^{(r)}\right)^\top\left(\eta_{\bX|g}^{(r)}\bSigma_{\bX|g}^{(r)}\right)^{-1}\left(\bx_i-\bmu_{\bX|g}^{(r)}\right)$ and ${v_{\bX}=(2-d_{\bX})/2}$, it follows that 
\begin{align}
    \mathrm{E}^{(r)}_{1_{\bX|ig}}&:=\mathrm{E}\left(W_{ig}|\bx_i, \btheta^{(r)}\right)=\frac{\sqrt{b_{\bX|ig}^{(r)}}K_{v_{\bX}+1}\left(\sqrt{a_{\bX|g}^{(r)}b_{\bX|ig}^{(r)}}\right)}{\sqrt{a_{\bX|g}^{(r)}}K_{v_{\bX}}\left(\sqrt{a_{\bX|g}^{(r)}b_{\bX|ig}^{(r)}}\right)},\notag\\
\mathrm{E}^{(r)}_{2_{\bX|ig}}&:=\mathrm{E}\left(W_{ig}^{-1}|\bx_i, \btheta^{(r)}\right)=\frac{\sqrt{a_{\bX|g}^{(r)}}K_{v_{\bX}+1}\left(\sqrt{a_{\bX|g}^{(r)}b_{\bX|ig}^{(r)}}\right)}{\sqrt{b_{\bX|ig}^{(r)}}K_{v_{\bX}}\left(\sqrt{a_{\bX|g}^{(r)}b_{\bX|ig}^{(r)}}\right)}-\frac{2v_{\bX}}{b_{\bX|ig}^{(r)}},\notag\\
    \mathrm{E}^{(r)}_{3_{\bX|ig}}&:=\mathrm{E}\left(\mathrm{ln} W_{ig}|\bx_i, \btheta^{(r)}\right)=\mathrm{ln}\frac{\sqrt{b_{\bX|ig}^{(r)}}}{\sqrt{a_{\bX|g}^{(r)}}}+\frac{\partial}{\partial v}K_{v_{\bX}}\left(\sqrt{a_{\bX|g}^{(r)}b_{\bX|ig}^{(r)}}\right),\notag\\
\tilde{\mathrm{E}}^{(r)}_{1_{\bX|ig}}&:=\mathrm{E}\left(\tilde{W}_{ig}|\bx_i, \btheta^{(r)}\right)=\frac{\sqrt{c_{\bX|ig}^{(r)}}K_{v_{\bX}+1}\left(\sqrt{a_{\bX|g}^{(r)}c_{\bX|ig}^{(r)}}\right)}{\sqrt{a_{\bX|g}^{(r)}}K_{v_{\bX}}\left(\sqrt{a_{\bX|g}^{(r)}c_{\bX|ig}^{(r)}}\right)},\notag\\
\tilde{\mathrm{E}}^{(r)}_{2_{\bX|ig}}&:=\mathrm{E}\left(\tilde{W}_{ig}^{-1}|\bx_i, \btheta^{(r)}\right)=\frac{\sqrt{a_{\bX|g}^{(r)}}K_{v_{\bX}+1}\left(\sqrt{a_{\bX|g}^{(r)}c_{\bX|ig}^{(r)}}\right)}{\sqrt{c_{\bX|ig}^{(r)}}K_{v_{\bX}}\left(\sqrt{a_{\bX|g}^{(r)}c_{\bX|ig}^{(r)}}\right)}-\frac{2v_{\bX}}{c_{\bX|ig}^{(r)}},\notag\\
    \tilde{\mathrm{E}}^{(r)}_{3_{\bX|ig}}&:=\mathrm{E}\left(\mathrm{ln} \tilde{W}_{ig}|\bx_i, \btheta^{(r)}\right)=\mathrm{ln}\frac{\sqrt{c_{\bX|ig}^{(r)}}}{\sqrt{a_{\bX|g}^{(r)}}}+\frac{\partial}{\partial v}K_{v_{\bX}}\left(\sqrt{a_g^{(r)}c_{ig}^{(r)}}\right)\notag.
\end{align}
Similarly, since $W_{ig}|\by_i,z_{ig}=1\sim \mathcal{GIG}(a_g,b_{ig},v)$ and $\tilde{W}_{ig}|\by_i,z_{ig}=1\sim \mathcal{GIG}(a_g,c_{ig},v)$ where $a>0,b>0$ and $v\in \mathcal{R}$, $a_{\bY|g}^{(r)}=2+\left(\balpha_{\bY|g}^{(r)}\right)^\top\left(\bSigma_{\bY|g}^{(r)}\right)^{-1}\balpha_{\bY|g}^{(r)}$, $b_{\bY|ig}^{(r)}=\left(\by_i-\bmu_{\bY}\left(\bx_i;\bbeta_g^{(r)}\right)\right)^\top\left(\bSigma_{\bY|g}^{(r)}\right)^{-1}\left(\by_i-\bmu_{\bY}\left(\bx_i;\bbeta_g^{(r)}\right)\right)$
and $c_{\bY|ig}^{(r)}=\left(\by_i-\bmu_{\bY}\left(\bx_i;\bbeta_g^{(r)}\right)\right)^\top \left(\eta_{\bY|g}^{(r)}\bSigma_{\bY|g}^{(r)}\right)^{-1}\left(\by_i-\bmu_{\bY}\left(\bx_i;\bbeta_g^{(r)}\right)\right)$ and ${v_{\bY}=(2-d_{\bY})/2}$, it follows that 
\begin{align*}
    \mathrm{E}^{(r)}_{1_{\bY|ig}}&:=\mathrm{E}\left(W_{ig}|\by_i, \btheta^{(r)}\right)=\frac{\sqrt{b_{\bY|ig}^{(r)}}K_{v_{\bY}+1}\left(\sqrt{a_{\bY|g}^{(r)}b_{\bY|ig}^{(r)}}\right)}{\sqrt{a_{\bY|g}^{(r)}}K_{v_{\bY}}\left(\sqrt{a_{\bY|g}^{(r)}b_{\bY|ig}^{(r)}}\right)},\\
\mathrm{E}^{(r)}_{2_{\bY|ig}}&:=\mathrm{E}\left(W_{ig}^{-1}|\by_i, \btheta^{(r)}\right)=\frac{\sqrt{a_{\bY|g}^{(r)}}K_{v_{\bY}+1}\left(\sqrt{a_{\bY|g}^{(r)}b_{\bY|ig}^{(r)}}\right)}{\sqrt{b_{\bY|ig}^{(r)}}K_{v_{\bY}}\left(\sqrt{a_{\bY|g}^{(r)}b_{\bY|ig}^{(r)}}\right)}-\frac{2v_{\bY}}{b_{\bY|ig}^{(r)}},\\
    \mathrm{E}^{(r)}_{3_{\bY|ig}}&:=\mathrm{E}\left(\mathrm{ln} W_{ig}|\by_i, \btheta^{(r)}\right)=\mathrm{ln}\frac{\sqrt{b_{\bY|ig}^{(r)}}}{\sqrt{a_{\bY|g}^{(r)}}}+\frac{\partial}{\partial v}K_{v_{\bY}}\left(\sqrt{a_{\bY|g}^{(r)}b_{\bY|ig}^{(r)}}\right),\\    \tilde{\mathrm{E}}^{(r)}_{1_{\bY|ig}}&:=\mathrm{E}\left(\tilde{W}_{ig}|\by_i, \btheta^{(r)}\right)=\frac{\sqrt{c_{\bY|ig}^{(r)}}K_{v_{\bY}+1}\left(\sqrt{a_{\bY|g}^{(r)}c_{\bY|ig}^{(r)}}\right)}{\sqrt{a_{\bY|g}^{(r)}}K_{v_{\bY}}\left(\sqrt{a_{\bY|g}^{(r)}c_{\bY|ig}^{(r)}}\right)},\\
\tilde{\mathrm{E}}^{(r)}_{2_{\bY|ig}}&:=\mathrm{E}\left(\tilde{W}_{ig}^{-1}|\by_i, \btheta^{(r)}\right)=\frac{\sqrt{a_{\bY|g}^{(r)}}K_{v_{\bY}+1}\left(\sqrt{a_{\bY|g}^{(r)}c_{\bY|ig}^{(r)}}\right)}{\sqrt{c_{\bY|ig}^{(r)}}K_{v_{\bY}}\left(\sqrt{a_{\bY|g}^{(r)}c_{\bY|ig}^{(r)}}\right)}-\frac{2v_{\bY}}{c_{\bY|ig}^{(r)}},\\
\tilde{\mathrm{E}}^{(r)}_{3_{\bY|ig}}&:=\mathrm{E}\left(\mathrm{ln} \tilde{W}_{ig}|\by_i, \btheta^{(r)}\right)=\mathrm{ln}\frac{\sqrt{c_{\bY|ig}^{(r)}}}{\sqrt{a_{\bY|g}^{(r)}}}+\frac{\partial}{\partial v}K_{v_{\bY}}\left(\sqrt{a_g^{(r)}c_{ig}^{(r)}}\right).
\end{align*}
Then, by substituting $z_{ig}$ with $z_{ig}^{(r)}$, $v_{ig}$ with $v_{ig}^{(r)}$, $u_{ig}$ with $u_{ig}^{(r)}$, $\mathrm{E}_{1_{\bX|ig}}$ with $\mathrm{E}_{1_{\bX|ig}}^{(r)}$,
$\mathrm{E}_{2_{\bX|ig}}$ with $\mathrm{E}_{2_{\bX|ig}}^{(r)}$,
$\mathrm{\tilde{E}}_{1_{\bX|ig}}$ with $\mathrm{\tilde{E}}_{1_{\bX|ig}}^{(r)}$, $\mathrm{\tilde{E}}_{1_{\bX|ig}}$, with $\mathrm{\tilde{E}}_{1_{\bX|ig}}^{(r)}$ 
 $\mathrm{E}_{1_{\bY|ig}}$ with  $\mathrm{E}_{1_{\bY|ig}}^{(r)}$,
 $\mathrm{E}_{2_{\bY|ig}}$ with $\mathrm{E}_{2_{\bY|ig}}^{(r)}$,
 $\mathrm{\tilde{E}}_{1_{\bY|ig}}$ with $\mathrm{\tilde{E}}_{1_{\bY|ig}}^{(r)}$, and $\mathrm{\tilde{E}}_{2_{\bY|ig}}$ with $\mathrm{\tilde{E}}_{2_{\bY|ig}}^{(r)}$ in \eqref{eq complete log likelihood}, we obtain 
$Q\left(\btheta|\btheta^{(r)}\right)$.
\subsubsection*{CM-step 1}
The first CM-step on the $(r+1)$-th iteration of the ECM algorithm requires the calculation of $\btheta_1^{(r+1)}$ as the value of $\btheta_1$ that maximizes $Q\left(\btheta|\btheta^{(r)}\right)$.
In particular, after some algebra shown in the Appendix, we obtain
\begin{align*}
    \pi_g^{(r+1)}=&\frac{n_g^{(r)}}{n},\\
	\delta_{\bX|g}^{(r+1)}=&\frac{1}{n_g^{(r)}}\sum_{i=1}^{n}z_{ig}^{(r)}v_{ig}^{(r)},\\
\bmu_{\bX|g}^{(r+1)}=&\frac{\sum\limits^n_{i=1}B_i^{(r)}\left[ \sum \limits^{n}_{i=1}A_i^{(r)}\bx_i\right]-\sum\limits^n_{i=1}C_i^{(r)}\left[ \sum \limits^{n}_{i=1}C_i^{(r)}\bx_i\right]}{\sum\limits^n_{i=1}B_i^{(r)}\sum\limits^n_{i=1}A_i^{(r)}-\left(\sum\limits^n_{i=1}C_i^{(r)}\right)^2},\\
\balpha_{\bX|g}^{(r+1)}=&\frac{ \sum \limits^{n}_{i=1}A_i^{(r)}\left[  \sum \limits^{n}_{i=1}C_i^{(r)}\bx_i\right]- \sum \limits^{n}_{i=1}C_i^{(r)}\left[ \sum \limits^{n}_{i=1}A_i^{(r)}\bx_i\right]}{ \sum \limits^{n}_{i=1}B_i^{(r)} \sum \limits^{n}_{i=1}A_i^{(r)}-\left( \sum \limits^{n}_{i=1}C_i^{(r)}\right)^2},\\
    \bSigma_{\bX|g}^{(r+1)}=&\frac{1}{n_{g}^{(r)}}\sum^n_{i=1}A_i^{(r)}\left(\bx_i-\bmu_{\bX|g}^{(r+1)}\right)\left(\bx_i-\bmu_{\bX|g}^{(r+1)}\right)^\top-\frac{2}{n_g^{(r)}}\sum^{n}_{i=1}C_i^{(r)}\left(\bx_i-\bmu_{\bX|g}^{(r+1)}\right)\left(\balpha_{\bX|g}^{(r+1)}\right)^\top\\
     &+\frac{1}{n_g^{(r)}}\sum^{n}_{i=1}B_i^{(r)}\balpha_{\bX|g}^{(r+1)}\left(\balpha_{\bX|g}^{(r+1)}\right)^\top,\\
	\delta_{\bY|g}^{(r+1)}=&\frac{1}{n_g^{(r)}}\sum_{i=1}^{n}z_{ig}^{(r)}u_{ig}^{(r)},\\
    \bbeta_g^{(r+1)} =& \left(\left(
    \sum\limits^n_{k=1}D_i\bx_i^*{\bx_i^*}^\top-\frac{\sum\limits^n_{i=1}G_i\left(\sum\limits^n_{k=1}G_k\bx_k^*\right){\bx_i^*}^\top}{\sum\limits^n_{k=1}F_k}\right)^\top\right)^{-1}\left( \sum\limits^n_{k=1}D_i\by_i{\bx_i^*}^\top-\frac{\sum\limits^n_{i=1}G_i\left(\sum\limits^n_{k=1}G_k\by_k\right){\bx_i^*}^\top}{\sum\limits^n_{k=1}F_k}\right)^\top,\\
\balpha_{\bY|g}^{(r+1)}=&\frac{ \sum \limits^{n}_{i=1}G_i^{(r)}\left(\by_i-\bmu_Y\left(\bx_i;\bbeta_g^{(r+1)}\right)\right)}{ \sum \limits^{n}_{i=1}F_i^{(r)}},\\
    \bSigma_{\bY|g}^{(r+1)}=&\frac{1}{n_{g}^{(r)}}\sum^n_{i=1}D_i^{(r)}\left(\by_i-\bmu_{Y}\left(\bx_i;\bbeta_g^{(r+1)}\right)\right)\left(\by_i-\bmu_{Y}\left(\bx_i;\bbeta_g^{(r+1)}\right)\right)^\top\\
    &-\frac{2}{n_g^{(r)}}\sum^{n}_{i=1}F_i^{(r)}\left(\by_i-\bmu_{Y}\left(\bx_i;\bbeta_g^{(r+1)}\right)\right)\left(\balpha_{\bY|g}^{(r+1)}\right)^\top\\
     &+\frac{1}{n_g^{(r)}}\sum^{n}_{i=1}G_i^{(r)}\balpha_{\bY|g}^{(r+1)}\left(\balpha_{\bY|g}^{(r+1)}\right)^\top,
\end{align*}
where
\begin{align}
    A_i^{(r)}&=z_{ig}^{(r)}\left(\left(1-v_{ig}^{(r)}\right)\mathrm{E}_{2_{\bX|ig}}^{(r)}+\frac{v_{ig}^{(r)}}{\eta_{\bX|g}^{(r)}}\mathrm{\tilde{E}}_{2_{\bX|ig}}^{(r)}\right),\notag\\
    B_i^{(r)}&=z_{ig}^{(r)}\left(\left(1-v_{ig}^{(r)}\right)\mathrm{E}_{1_{\bX|ig}}^{(r)}+v_{ig}^{(r)}\mathrm{\tilde{E}}_{1_{\bX|ig}}^{(r)}\right),\notag\\
    C_i^{(r)}&= z_{ig}^{(r)}\left(\left(1-v_{ig}^{(r)}\right)+\frac{v_{ig}^{(r)}}{\sqrt{\eta_{\bX|g}^{(r)}}}\right),\notag\\
    D_i^{(r)}&=z_{ig}^{(r)}\left(\left(1-u_{ig}^{(r)}\right)\mathrm{E}_{2_{\bY|ig}}^{(r)}+\frac{u_{ig}^{(r)}}{\eta_{\bY|g}^{(r)}}\mathrm{\tilde{E}}_{2_{\bY|ig}}^{(r)}\right),\label{eq notation D}\\
    F_i^{(r)}&=z_{ig}^{(r)}\left(\left(1-u_{ig}^{(r)}\right)\mathrm{E}_{1_{\bY|ig}}^{(r)}+u_{ig}^{(r)}\mathrm{\tilde{E}}_{1_{\bY|ig}}^{(r)}\right),\label{eq notation F}\\
    G_i^{(r)}&= z_{ig}^{(r)}\left(\left(1-u_{ig}^{(r)}\right)+\frac{u_{ig}^{(r)}}{\sqrt{\eta_{\bY|g}^{(r)}}}\right).\label{eq notation G}
\end{align}

\subsubsection*{CM-step 2}
In the second CM-step, on the $(r+1)$-th iteration of the ECM algorithm, we calculate $\btheta_2^{(r+1)}$ as the value of $\btheta_2$ that maximizes $Q\left(\btheta|\btheta_1=\btheta^{(r+1)}\right)$.
More specifically, an update for $\eta_{\bX|g}$ is calculated by differentiating $Q_3^{\text{bad}}\left(\bmu_{\bX},\bSigma_{\bX},\balpha_{\bX},\boldeta_{\bX}\right)$ in \eqref{eq Q3 bad} with respect to $\eta_{\bX|g}$. 
This gives
\begin{align}\label{eq Q3 bad partial derivative}
    \frac{\partial Q_3^{\text{bad}}\left(\bmu_{\bX},\bSigma_{\bX},\balpha_{\bX},\boldeta_{\bX}\right)}{\partial \eta_{\bX|g}}=&-\frac{d_{\bX}}{2\eta_{\bX|g}}\sum_{i=1}^nz_{ig}v_{ig}+\frac{1}{2\eta_{\bX|g}^2}\sum_{i=1}^nz_{ig}v_{ig}\mathrm{\tilde{E}}_{2_{\bX|ig}}(\bx_i-\bmu_{\bX|g})^\top\bSigma_{\bX|g}^{-1}(\bx_i-\bmu_{\bX|g})\notag\\
    &-\frac{1}{2\eta_{\bX|g}^{3/2}}\sum_{i=1}^nz_{ig}v_{ig}(\bx_i-\bmu_{\bX|g})^\top\bSigma_{\bX|g}^{-1}\balpha_{\bX|g}.
\end{align}
Now, by setting the partial derivative in \eqref{eq Q3 bad partial derivative} equal to zero and multiplying by $-2\eta_{\bX|g}^2$, allows us to express this equation as
\begin{align*}
    a_{\bX|g}^*\eta_{\bX|g}+b_{\bX|j}^*\sqrt{\eta_{\bX|g}}+c_{\bX|j}^*=0,
\end{align*}
where 
\begin{align*}
    a_{\bX|g}^{*(r+1)}=d_{\bX}\sum_{i=1}^nz_{ig}^{(r)}v_{ig}^{(r)}, \quad
    b_{\bX|g}^{*(r+1)}=\sum_{i=1}^nz_{ig}^{(r)}v_{ig}^{(r)}\left(\bx_i-\bmu_{\bX|g}^{(r)}\right)^\top{\left(\bSigma_{\bX|g}^{(r)}\right)}^{-1}\balpha_{\bX|g}^{(r)},
\end{align*}
\begin{align*}
c_{\bX|g}^{*(r+1)}=-\sum_{i=1}^nz_{ig}^{(r)}v_{ig}^{(r)}\mathrm{\tilde{E}}^{(r)}_{2_{\bX|ig}}\left(\bx_i-\bmu_{\bX|g}^{(r)}\right)^\top\left(\bSigma_{\bX|g}^{(r)}\right)^{-1}\left(\bx_i-\bmu_{\bX|g}^{(r)}\right).
\end{align*}
Using the quadratic formula, we can calculate the real positive root for this equation as
\begin{align*}
    \eta_{\bX|g}^*=\left(\frac{-b_{\bX|g}^{*(r+1)}+\sqrt{\left(b_{\bX|g}^{*(r+1)}\right)^2-4a_{\bX|g}^{*(r+1)}c_{\bX|g}^{*(r+1)}}}{2a_{\bX|g}^{*(r+1)}}\right)^2,
\end{align*}
where a lower limit of 1 is set to ensure the integrity of the parameter. Hence the update for $\eta_{\bX|g}$ is given by
\begin{align*}
    \eta_{\bX|g}^{(r+1)}=\mathrm{max}\left\{1,\eta_{\bX|g}^*\right\}.
\end{align*}
Similarly, an update $\eta_{\bY|g}^{(r+1)}$ for $\eta_{\bY|g}$ is calculated by differentiating $Q_5^{\text{bad}}\left(\bbeta,\bSigma_{\bY},\balpha_{\bY},\boldeta_{\bY}\right)$ with respect to  $\eta_{\bY|g}$, which results in 
\begin{align*}
    \eta_{\bY|g}^*=\left(\frac{-b_{\bY|g}^{*(r+1)}+\sqrt{\left(b_{\bY|g}^{*(r+1)}\right)^2-4a_{\bY|g}^{*(r+1)}c_{\bY|g}^{*(r+1)}}}{2a_{\bY|g}^{*(r+1)}}\right)^2,
\end{align*}
where 
\begin{align*}
    a_{\bY|g}^{*(r+1)}=d_{\bY}\sum_{i=1}^nz_{ig}^{(r)}u_{ig}^{(r)}, \quad
    b_{\bY|g}^{*(r+1)}=\sum_{i=1}^nz_{ig}^{(r)}u_{ig}^{(r)}\left(\by_i-\bmu_{Y}\left(\bx_i;\bbeta_g^{(r+1)}\right)\right)^\top{\left(\bSigma_{\bY|g}^{(r)}\right)}^{-1}\balpha_{\bY|g}^{(r)},
\end{align*}
\begin{align*}
c_{\bY|g}^{*(r+1)}=-\sum_{i=1}^nz_{ig}^{(r)}u_{ig}^{(r)}\mathrm{\tilde{E}}^{(r)}_{2_{\bY|ig}}\left(\by_i-\bmu_{Y}\left(\bx_i;\bbeta_g^{(r+1)}\right)\right)^\top\left(\bSigma_{\bY|g}^{(r)}\right)^{-1}\left(\by_i-\bmu_{Y}\left(\bx_i;\bbeta_g^{(r+1)}\right)\right),
\end{align*}
and hence an update for $\eta_{\bY|g}$ is given by
\begin{align}
    \eta_{\bY|g}^{(r+1)}=\mathrm{max}\left\{1,\eta_{\bY|g}^*\right\}.
\end{align}
\subsection{Notes on an EM algorithm for the SALCWM}\label{Section EM algorithm for SALCWM}
Let $(\bx_1,\by_1),\dots,(\bx_n,\by_n)$ instead be an observed sample from the SALCWM \eqref{pdf asymmetric Laplace cwm}. 
The ML estimates for parameters of the SALCWM follow easily from the ECM described in Section \ref{section ecm algorithm for cSALCWM}. 
In this case, we employ an EM algorithm instead of an ECM algorithm, as there is no need to estimate the contamination parameters (i.e., $\delta_{\bX|g}$, $\eta_{\bX|g}$, $\delta_{\bY|g}$, and $\eta_{\bY|g}$). 
Consequently, the EM algorithm only consists of a single M-step, instead of the two CM-steps required in the ECM algorithm.

In this setting, there is only one source of incompleteness, which arises from the uncertainty regarding the component membership of each observation. 
This uncertainty is represented by an indicator vector $\bz_i = (z_{i1}, \dots, z_{iG})$, where $z_{ig} = 1$ if the observation $(\bx_i, \by_i)$ belongs to component $g$, and $z_{ig} = 0$ otherwise, as in the case of the SALCWM. 
Since we do not account for additional sources of uncertainty in this case of whether a point $(\bx_i, \by_i)$ is an outlier or a leverage point, we set $\bv_i$ and $\bu_i$ in the ECM algorithm equal to $\boldsymbol{0}$.

Thus, in the M-step on the $(r+1)-th$ iteration, the EM algorithm updates for $\pi_g$, $\bmu_{\bX|g}$, $\balpha_{\bX|g}$, $\bSigma_{\bX|g}$, $\bbeta_g$ , $\balpha_{\bY|g}$, and $\bSigma_{\bY|g}$ are calculated as described in CM-step 1 in Section \ref{section ecm algorithm for cSALCWM}, but instead with

\begin{align*}
    A_i^{(r)}=z_{ig}^{(r)}\mathrm{E}_{2_{\bX|ig}}^{(r)}, \quad
    B_i^{(r)}=z_{ig}^{(r)}\mathrm{E}_{1_{\bX|ig}}^{(r)}, \quad
    C_i^{(r)}= z_{ig}^{(r)},\\
    D_i^{(r)}=z_{ig}^{(r)}\mathrm{E}_{2_{\bY|ig}}^{(r)},\quad
    F_i^{(r)}=z_{ig}^{(r)}\mathrm{E}_{1_{\bY|ig}}^{(r)},\quad
    G_i^{(r)}= z_{ig}^{(r)}.
\end{align*}

\subsection{Initialization}
The choice of the starting values for EM-based algorithms constitutes an important issue \citep{biernacki2003choosing,punzo2017robust}
In the first iteration of the EM algorithm for the SALCWM,
we follow the initialization approach used in the $G$-component \texttt{MixSAL} package \citep{franczak2018package}, where initial cluster labels are assigned using k-means clustering. 
$\bmu_{\bX|g}^{(0)}$ is then computed as the weighted mean of $\bX$ of the points assigned to the $g$-th cluster, while $\bSigma_{\bX|g}^{(0)}$ is the weighted covariance matrix. 
The asymmetry parameter, $\balpha_{\bX|g}^{(0)}$, is initially set to $\boldsymbol{0}$, ensuring a symmetric starting point before the EM algorithm iteratively adjusts it to account for skewness. 
Similarly,  $\bmu_{\bY|g}^{(0)}$ is computed as the weighted mean of $\bY$ of the points assigned to the $g$-th cluster, while $\bSigma_{\bY|g}^{(0)}$ is the weighted covariance matrix.  
The mixing proportions are initialized as the mean of the cluster assignment matrix. 

For the $G$-component cSALCWM, we adopt the approach used by \citet{punzo2017robust}. 
The $G$-component SALCWM in \eqref{pdf asymmetric Laplace cwm} can be seen as nested in the $G$-component cSALCWM in \eqref{pdf contaminated asymmetric Laplace cwm} when $\delta_{\bX|g},\delta_{\bY|g}\to0^+$ or $\delta_{\bX|g},\delta_{\bY|g}\to1^-$, and $\eta_{\bX|g},\eta_{\bY|g}\to1^+$, for $g=1,\dots,G$. 
Under these conditions, $u_{ig}, v_{ig}\to 0^+$, $i=1,\dots,n$ and $g=1,\dots,G$, causing model \eqref{pdf contaminated asymmetric Laplace cwm} to tend to \eqref{pdf asymmetric Laplace cwm}. 
Consequently, the posterior probabilities, say $z_{ig}^{(0)}$, obtained from the EM algorithm for the SALCWM -- along with the constraints $u_{ig}^{(0)},v_{ig}^{(0)}\to0^+$, and $\eta_{\bX|g}^{(0)},\eta_{\bX|g}^{(0)}\to1^+$ -- can be used to initialize the first CM-step of our ECM algorithm in Section \ref{section ecm algorithm for cSALCWM}. 
From an operational point of view, due to the monotonicity property of the ECM algorithm, this also guarantees that the observed-data log-likelihood of the cSALCWM will always be greater than, or equal to, the observed-data log-likelihood of the initial SALCWM. 
In our analysis, we set $u_{ig}^{(0)},v_{ig}^{(0)}=0.001$ and $\eta_{\bX|g}^{(0)},\eta_{\bY|g}^{(0)}=1.001$. These values are not set equal to 0 and 1, respectively, to avoid singularity issues within the first iteration \citep{punzo2020high}.
\subsection{Convergence}
The Aitken acceleration \citep{aitken1926iii} is used to assess the convergence of our EM algorithms described in Section \ref{section ecm algorithm for cSALCWM} and \ref{Section EM algorithm for SALCWM}. 
With this estimate, we can determine whether the algorithm has reached convergence, meaning the log-likelihood is sufficiently close to its estimated asymptotic value. 
The Aitken acceleration at iteration $ r+1 $ is calculated as  
\begin{align*}
a^{(r+1)} = \frac{l^{(r+2)} - l^{(r+1)}}{l^{(r+1)} - l^{(r)}},
\end{align*}
where $l^{(r)}$ represents the observed-data log-likelihood at iteration  $r$. Based on this, the asymptotic estimate of the log-likelihood at iteration $r+2$ is given by  
\begin{align*}
l^{(r+2)}_{\infty} = l^{(r+1)} + \frac{1}{1 - a^{(r+1)}} \left( l^{(r+2)} - l^{(r+1)} \right).
\end{align*}
Following \citet{bohning1994distribution}, the EM-based algorithms are considered to have converged when the difference  
\begin{align*}
    l^{(r+2)}_{\infty} - l^{(r+1)} < \epsilon
\end{align*}
is both positive and smaller than a predefined threshold  $\epsilon$. In our analysis, we set  $\epsilon = 0.00001$.

\subsection{Dealing with infinite log-likelihood values}
As documented in \citet{ morris2019asymmetric} and \cite{mclaughlin2024unsupervised}, complications can arise when estimating the location parameter $\bmu_{g}$ of a SAL and cSAL distribution, and consequently, the SALCWM and cSALCWM. Computational singularities occur when the parameter $\bmu_{\bX|g}$ or $\bmu_{\bY|g}\left(\bx_i;\bbeta_g\right)$ is equal, or very close, to some observation $\bx_{i}$ or $\by_i$, respectively, in the data. In our CWM family, such singularities manifest when calculating the expected values $\mathrm{E}_{2_{\bX|ig}}$ or $\mathrm{E}_{2_{\bY|ig}}$ (and $\tilde{\mathrm{E}}_{2_{\bX|ig}}$ or $\tilde{\mathrm{E}}_{2_{\bY|ig}}$ in the case of the cSALCWM). To remedy this issue, we stop updating $\bmu_{\bX|g}$ and $\bmu_{\bY|g}\left(\bx_i;\bbeta_g\right)$ if the Euclidean distance between $\bmu_{\bX|g}$ and $\bx_{i}$, or between $\bmu_{\bY|g}\left(\bx_i;\bbeta\right)$ and $\by_i$ for $i=1,\dots,n$ and $g=1,\dots,G$, is less than a user-specified value that is sufficiently small. In practice, we found that $10^{-5}$ was an effective value for our purposes. As noted by \citet{franczak2013mixtures}, this is a simple but effective solution to dealing with the problem at hand. Other methods for dealing with infinite likelihoods can be found in \cite{fang2023tackling}.

\subsection{Cluster-wise automatic detection of atypical points from the cSALCWM}
For an observation $(\bx_i,\by_i)$ from the cSALCWM in \eqref{pdf contaminated asymmetric Laplace cwm}, the classification involves
\begin{enumerate}
    \item determining its component of membership;
    \item establishing if it is typical, outlier, good leverage, or bad leverage in that component. 
\end{enumerate}
Let $\hat{\bu}_i$, $\hat{\bv}_i$ and $\hat{\bz}_i$ denote the expected values of $\bu_i$, $\bv_i$ and $\bz_i$, respectively, arising from the ECM algorithm at convergence. The component membership of $(\bx_i,\by_i)$ is determined using the maximum a posteriori probabilities (MAP) operator
\begin{align*}
    \text{MAP}(\hat{z}_{ig})=\begin{cases}
        1\quad \text{if max$_h\left\{\hat{z}_{ig}\right\}$ occurs in component $h=g$}, \\
        0 \quad \text{otherwise}.
    \end{cases}
\end{align*}
Next, we consider $\hat{u}_{ih}$ and $\hat{v}_{ih}$ where $h$ is selected such that $\text{MAP}(\hat{z}_{ih})=1$. These values provide key insights into whether $(\bx_i,\by_i)$ is an outlier or a leverage point, respectively, in group $h$. However, the user could be interested in obtaining a classification of this observation according to \tablename~\ref{table categorization of points in a regression analysis}. In such cases, the decision rule in \tablename~\ref{table rule of points in a regression analysis} could be applied. Thus, once the observation has been classified into one of the groups, applying the rule in \tablename~\ref{table rule of points in a regression analysis} provides deeper insights into its role within that group.
\begin{table}[!ht]
\caption{Rule for classifying a generic observation $(\bx_i,\by_i)$ into one of the four categories of \tablename~\ref{table categorization of points in a regression analysis}.}
\centering
\begin{tabular}{lrr}
\toprule
   \diagbox{$\hat{u}_{ih}$}{$\hat{v}_{ih}$} & \multicolumn{1}{c}{$[0,0.5)$} & \multicolumn{1}{c}{$[0.5,1]$} \\ \midrule
$[0,0.5)$ & Typical            & Good leverage                \\
$[0.5,1]$  & Outlier            & Bad leverage                \\ \bottomrule
\end{tabular}
\label{table rule of points in a regression analysis}
\end{table} 
\section{Sensitivity analysis}\label{Section sensitivity anslys}
In this study, we perform a sensitivity analysis to investigate the impact of atypical observations on the SALCWM and cSALCWM. 
To generate the data, we follow the simulation study performed by \citet{punzo2017robust} and we consider the following data generation processes with univariate conditional/marginal distributions (i.e., $d_{\bX}=d_{\bY}=1$) and $G=2$ mixture components:
\begin{enumerate}[(a)]
\item SALCWM with 1\% of points randomly substituted by outliers with coordinates $(\mu_{X|2},y^*)$, where $\mu_{X|2}$ is the mean of the second mixture component, and $y^*$ is generated from a uniform distribution over the interval $(8,10)$;
\item SALCWM with 1\% of points randomly substituted by good leverage points with coordinates $(x^*,y^*)$, where $x^*$ is generated from a uniform distribution over the interval $(8,10)$ and $y^*=\beta_{02}+\beta_{12}x^*$ lies on the straight line on the second mixture component;
    \item SALCWM with 1\% of points randomly substituted by bad leverage points with coordinates $(x^*,y^*)$ 
 where both $x^*$ and $y^*$ are generated from a uniform distribution over the interval $(8,10)$;
    \item SALCWM with 1\% of points randomly substituted by noise points generated from a uniform distribution over the interval $(-8,8)$.
\end{enumerate}
All of these data generation processes share the following parameters
\begin{align*}
    \pi=0.4,\quad 
    \bbeta_1=\begin{pmatrix}
        -2\\
        -0.2
    \end{pmatrix},
    \quad
    \bbeta_2=\begin{pmatrix}
        2\\
        0.2
    \end{pmatrix},
    \quad
    \Sigma_{Y|1}=\Sigma_{Y|1}=0.5,
      \quad
    \alpha_{Y|1}=-0.2, \quad \alpha_{Y|2}=0.2.
\end{align*}
The parameters relating to $X$ will be specified depending on either assignment independence or assignment dependence.
The four scenarios described cover various situations involving possible atypical values in real-world data, as outlined in \tablename~\ref{table categorization of points in a regression analysis}. Under each scenario, we generate 100 samples considering two different sample sizes, $n=250$ and $n=500$. The models are directly fitted with $G=2$. 
For comparison, we report the bias and the mean squared error (MSE) of the estimates for the mixture weight $\pi_1$ and all regression coefficients $\bbeta_1=(\beta_{01},\beta_{11})$ and $\bbeta_2=(\beta_{02},\beta_{12})$. Additionally, we assess the cSALCWM's ability to effectively classify atypical points according to \tablename~\ref{table rule of points in a regression analysis}. Specifically, we report the true positive rate (TPR), which measures the proportion of atypical observations that are correctly identified as atypical, and the false positive rate (FPR), corresponding to the proportion of typical points incorrectly classified as atypical.

\subsection{Comparison under assignment independence}
The first simulation study aims to compare the models under assignment independence. We do this by making use of Proposition \ref{proposition mix of regression special case of CWM} to generate data, which states that if the covariate parameters in the CWM are identical across all $G$ groups, it is a particular case of the CWMs which assumes assignment independence. In this case, all four data-generating processes share the following parameters
\begin{align*}
    \mu_{X|1}=\mu_{X|2}=0, \quad 
    \Sigma_{X|1}=\Sigma_{X|2}=1, \quad
    \alpha_{X|1}=\alpha_{X|2}=0.2.  
\end{align*}

\figurename~\ref{Fig assignment indep} illustrates a single replication of this setup. The black crosses represent the original SALCWM data points that were replaced, while the different atypical observations of each scenario under assignment independence are also depicted.

The obtained bias and MSE values are reported in \tablename~\ref{table bias mse values under assignment independence} for sample sizes $n=250$ and $n=500$, across 100 replications. We note that the bias and MSE values are small across all four scenarios. However, the contaminated models, cSALMRM and cSALCWM, generally have lower bias in absolute value and MSE than the SAL-based models, SALMRM and SALCWM. Under scenario (a), the bias and MSE values for our intercepts are more affected by outliers than slopes, which is natural if we recall that outliers, having coordinate $x=0$, have been added.
Additionally, regardless of the scenario, the MRMs and CWMs perform similar, which is expected since the two latent groups are generated under assignment independence, having the same distribution for $X$. Furthermore, both cSALMRM and cSALCWM effectively detect mild atypical values, achieving TPRs close to 1 and FPRs close to 0, as shown in Tables \ref{table tpr fpr under assignment independence for a to c} and \ref{table tpr fpr under assignment independence for d}. In scenario (d), the TPR is lower, which is not unexpected since some noisy points may resemble typical points, making them harder to detect as atypical; notably, the cSALCWM's TPR is nearly double that of the cSALMRM.

\begin{figure}[!ht]
    \centering
    \includegraphics[scale=0.6]{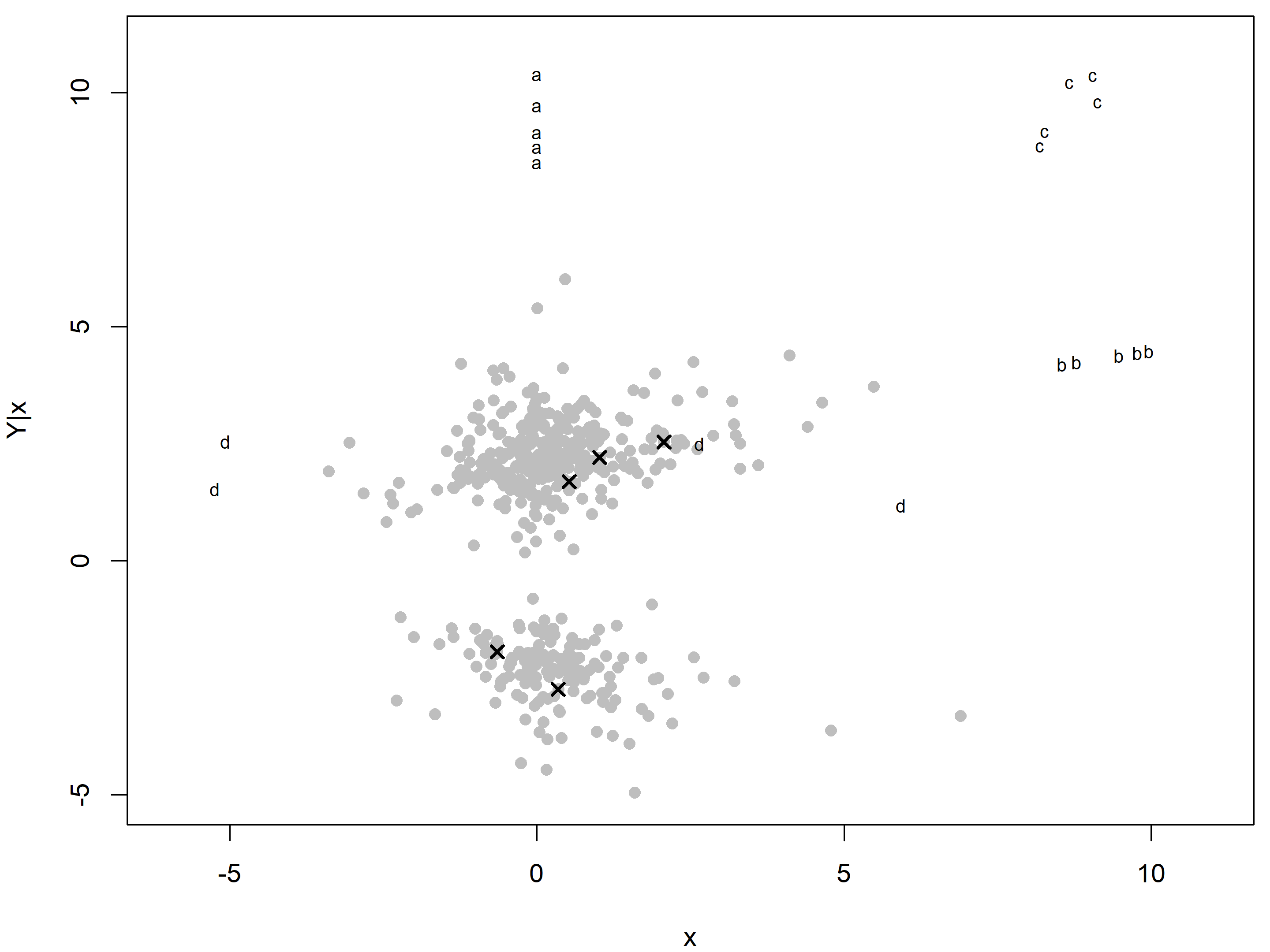}
    \caption{Example replication under assignment independence. The black crosses indicate original SALCWM data points that have been replaced, while the corresponding replacements are shown for each scenario.}
    \label{Fig assignment indep}
\end{figure}

\begin{table}[ht]
\centering
\caption{Bias and MSE values under assignment independence across 100 replications.}
\label{table bias mse values under assignment independence}
\begin{tabular}{lllrrlrrrrrlrr}
\toprule
                              &                       &              & \multicolumn{5}{c}{$n=250$}                                                                                & \multicolumn{1}{c}{} & \multicolumn{5}{c}{$n=500$}                                                                                \\ \cline{4-8} \cline{10-14} 
                              &                       &              & \multicolumn{2}{c}{MRM}                            &  & \multicolumn{2}{c}{CWM}                            & \multicolumn{1}{c}{} & \multicolumn{2}{c}{MRM}                            &  & \multicolumn{2}{c}{CWM}                            \\ \cline{4-5} \cline{7-8} \cline{10-11} \cline{13-14} 
                              &                       &              & \multicolumn{1}{c}{SAL} & \multicolumn{1}{c}{cSAL} &  & \multicolumn{1}{c}{SAL} & \multicolumn{1}{c}{cSAL} & \multicolumn{1}{c}{} & \multicolumn{1}{c}{SAL} & \multicolumn{1}{c}{cSAL} &  & \multicolumn{1}{c}{SAL} & \multicolumn{1}{c}{cSAL} \\ \midrule
\multirow{10}{*}{Scenario a)} & \multirow{5}{*}{Bias} & $\beta_{01}$ & -0.159                  & -0.097                   &  & -0.149                  & -0.095                   &                      & -0.219                  & -0.088                   &  & -0.233                  & -0.096                   \\
                              &                       & $\beta_{11}$ & -0.018                  & -0.009                   &  & -0.018                  & -0.011                   &                      & -0.003                  & 0.002                    &  & -0.005                  & 0.003                    \\
                              &                       & $\beta_{02}$ & -0.034                  & -0.019                   &  & -0.036                  & -0.018                   &                      & -0.031                  & -0.009                   &  & -0.030                  & -0.009                   \\
                              &                       & $\beta_{12}$ & -0.001                  & -0.002                   &  & -0.004                  & -0.002                   &                      & 0.002                   & 0.003                    &  & 0.002                   & 0.003                    \\
                              &                       & $\pi_1$      & 0.013                   & 0.000                    &  & 0.012                   & -0.001                   &                      & 0.026                   & 0.004                    &  & 0.028                   & 0.004                    \\
                              & \multirow{5}{*}{MSE}  & $\beta_{01}$ & 0.063                   & 0.039                    &  & 0.059                   & 0.038                    &                      & 0.090                   & 0.021                    &  & 0.098                   & 0.024                    \\
                              &                       & $\beta_{11}$ & 0.009                   & 0.008                    &  & 0.009                   & 0.007                    &                      & 0.005                   & 0.002                    &  & 0.006                   & 0.003                    \\
                              &                       & $\beta_{02}$ & 0.007                   & 0.006                    &  & 0.007                   & 0.006                    &                      & 0.004                   & 0.002                    &  & 0.004                   & 0.002                    \\
                              &                       & $\beta_{12}$ & 0.002                   & 0.002                    &  & 0.002                   & 0.002                    &                      & 0.001                   & 0.001                    &  & 0.001                   & 0.001                    \\
                              &                       & $\pi_1$      & 0.001                   & 0.000                    &  & 0.001                   & 0.000                    &                      & 0.002                   & 0.000                    &  & 0.002                   & 0.000                    \\
                              &                       &              &                         &                          &  &                         &                          &                      &                         &                          &  &                         &                          \\
\multirow{10}{*}{Scenario b)} & \multirow{5}{*}{Bias} & $\beta_{01}$ & -0.011                  & -0.010                   &  & -0.024                  & -0.023                   &                      & -0.010                  & -0.012                   &  & -0.024                  & -0.027                   \\
                              &                       & $\beta_{11}$ & -0.012                  & -0.012                   &  & -0.011                  & -0.013                   &                      & -0.003                  & -0.003                   &  & 0.000                   & 0.000                    \\
                              &                       & $\beta_{02}$ & 0.009                   & 0.007                    &  & 0.007                   & 0.006                    &                      & 0.004                   & 0.003                    &  & 0.003                   & 0.003                    \\
                              &                       & $\beta_{12}$ & -0.003                  & -0.002                   &  & -0.003                  & -0.002                   &                      & -0.001                  & 0.000                    &  & -0.001                  & 0.000                    \\
                              &                       & $\pi_1$      & -0.003                  & -0.003                   &  & -0.003                  & -0.003                   &                      & -0.003                  & -0.003                   &  & -0.003                  & -0.002                   \\
                              & \multirow{5}{*}{MSE}  & $\beta_{01}$ & 0.007                   & 0.008                    &  & 0.008                   & 0.008                    &                      & 0.003                   & 0.003                    &  & 0.004                   & 0.005                    \\
                              &                       & $\beta_{11}$ & 0.006                   & 0.006                    &  & 0.006                   & 0.006                    &                      & 0.002                   & 0.002                    &  & 0.002                   & 0.002                    \\
                              &                       & $\beta_{02}$ & 0.004                   & 0.004                    &  & 0.004                   & 0.004                    &                      & 0.002                   & 0.002                    &  & 0.002                   & 0.002                    \\
                              &                       & $\beta_{12}$ & 0.000                   & 0.000                    &  & 0.000                   & 0.000                    &                      & 0.000                   & 0.000                    &  & 0.000                   & 0.000                    \\
                              &                       & $\pi_1$      & 0.000                   & 0.000                    &  & 0.000                   & 0.000                    &                      & 0.000                   & 0.000                    &  & 0.000                   & 0.000                    \\
                              &                       &              &                         &                          &  &                         &                          &                      &                         &                          &  &                         &                          \\
\multirow{10}{*}{Scenario c)} & \multirow{5}{*}{Bias} & $\beta_{01}$ & -0.054                  & -0.043                   &  & -0.134                  & -0.083                   &                      & -0.043                  & -0.029                   &  & -0.224                  & -0.081                   \\
                              &                       & $\beta_{11}$ & -0.003                  & -0.005                   &  & 0.015                   & -0.005                   &                      & 0.002                   & 0.000                    &  & 0.068                   & 0.011                    \\
                              &                       & $\beta_{02}$ & -0.029                  & -0.013                   &  & -0.027                  & -0.014                   &                      & -0.038                  & -0.013                   &  & -0.030                  & -0.013                   \\
                              &                       & $\beta_{12}$ & 0.041                   & 0.024                    &  & 0.031                   & 0.017                    &                      & 0.058                   & 0.027                    &  & 0.033                   & 0.018                    \\
                              &                       & $\pi_1$      & -0.001                  & -0.006                   &  & 0.009                   & -0.001                   &                      & -0.002                  & -0.007                   &  & 0.021                   & 0.003                    \\
                              & \multirow{5}{*}{MSE}  & $\beta_{01}$ & 0.023                   & 0.019                    &  & 0.060                   & 0.035                    &                      & 0.011                   & 0.006                    &  & 0.103                   & 0.017                    \\
                              &                       & $\beta_{11}$ & 0.007                   & 0.006                    &  & 0.010                   & 0.007                    &                      & 0.002                   & 0.002                    &  & 0.015                   & 0.003                    \\
                              &                       & $\beta_{02}$ & 0.006                   & 0.006                    &  & 0.006                   & 0.006                    &                      & 0.004                   & 0.003                    &  & 0.004                   & 0.003                    \\
                              &                       & $\beta_{12}$ & 0.004                   & 0.003                    &  & 0.004                   & 0.003                    &                      & 0.005                   & 0.002                    &  & 0.004                   & 0.002                    \\
                              &                       & $\pi_1$      & 0.000                   & 0.000                    &  & 0.001                   & 0.000                    &                      & 0.000                   & 0.000                    &  & 0.001                   & 0.000                    \\
                              &                       &              &                         &                          &  &                         &                          &                      &                         &                          &  &                         &                          \\
\multirow{10}{*}{Scenario d)} & \multirow{5}{*}{Bias} & $\beta_{01}$ & -0.017                  & -0.020                   &  & -0.022                  & -0.021                   &                      & -0.002                  & -0.011                   &  & -0.007                  & -0.017                   \\
                              &                       & $\beta_{11}$ & -0.005                  & -0.009                   &  & -0.004                  & -0.009                   &                      & 0.008                   & 0.004                    &  & 0.006                   & 0.004                    \\
                              &                       & $\beta_{02}$ & 0.007                   & 0.004                    &  & 0.006                   & 0.004                    &                      & 0.001                   & 0.000                    &  & 0.001                   & 0.000                    \\
                              &                       & $\beta_{12}$ & -0.004                  & -0.001                   &  & -0.002                  & 0.000                    &                      & -0.001                  & 0.000                    &  & -0.001                  & 0.000                    \\
                              &                       & $\pi_1$      & 0.002                   & 0.002                    &  & 0.002                   & 0.002                    &                      & 0.002                   & 0.002                    &  & 0.002                   & 0.003                    \\
                              & \multirow{5}{*}{MSE}  & $\beta_{01}$ & 0.011                   & 0.011                    &  & 0.012                   & 0.012                    &                      & 0.004                   & 0.004                    &  & 0.005                   & 0.005                    \\
                              &                       & $\beta_{11}$ & 0.009                   & 0.009                    &  & 0.009                   & 0.008                    &                      & 0.003                   & 0.003                    &  & 0.003                   & 0.003                    \\
                              &                       & $\beta_{02}$ & 0.004                   & 0.004                    &  & 0.004                   & 0.005                    &                      & 0.002                   & 0.002                    &  & 0.002                   & 0.002                    \\
                              &                       & $\beta_{12}$ & 0.002                   & 0.002                    &  & 0.002                   & 0.002                    &                      & 0.001                   & 0.001                    &  & 0.001                   & 0.001                    \\
                              &                       & $\pi_1$      & 0.000                   & 0.000                    &  & 0.000                   & 0.000                    &                      & 0.000                   & 0.000                    &  & 0.000                   & 0.000                    \\ \bottomrule
\end{tabular}
\end{table}

\clearpage
\begin{table}[ht]
\centering
\caption{Values of TPRs and FPRs under for scenarios (a)-(c), under assignment independence across 100 replications.}\label{table tpr fpr under assignment independence for a to c}
\begin{tabular}{llrrrrrrr}
\toprule
                                         &               & \multicolumn{3}{c}{TPR}                                                                             & \multicolumn{1}{l}{} & \multicolumn{3}{c}{FPR}                                                                             \\ \cline{3-5} \cline{7-9} 
\multicolumn{1}{c}{$n$}                                          &               & \multicolumn{1}{c}{(a)} & \multicolumn{1}{c}{(b)} & \multicolumn{1}{c}{(c)} & \multicolumn{1}{l}{} & \multicolumn{1}{c}{(a)} & \multicolumn{1}{c}{(b)} & \multicolumn{1}{c}{(c)} \\ \midrule
\multicolumn{1}{c}{\multirow{3}{*}{250}} & cSALMRM: Outliers      & 1                           &                                 &                                 &                      & 0.028                           & 0.085                           & 0.056                           \\
\multicolumn{1}{c}{}                     & cSALCWM: Outliers & 0.970                                &                            &                                 &                      & 0.023                           & 0.109                           & 0.055                           \\
\multicolumn{1}{c}{}                     & cSALCWM: Good leverage &                                 & 0.970                           &                                 &                      & 0.083                           & 0.036                           & 0.030                           \\

\multicolumn{1}{c}{}                     & cSALCWM: Bad leverage  &                                 &                                 & 0.940                           &                      & 0.001                           & 0.001                           & 0.000                           \\
                                         &               &                                 &                                 &                                 &                      &                                 &                                 &                                 \\
\multirow{3}{*}{500}                     & cSALMRM: Outliers      & 1                           &                                 &                                 &                      & 0.007                           & 0.023                           & 0.028                           \\
                                         & cSALCWM: Outliers &0.970                                 &                            &                                 &                      & 0.007                           & 0.032                           & 0.023                           \\
                                         &cSALCWM: Good leverage &                                 & 1                           &                                 &                      & 0.039                           & 0.012                           & 0.016                           \\
                                         &cSALCWM: Bad leverage  &                                 &                                 & 0.980                           &                      & 0.001                           & 0.002                           & 0.001                           \\ \bottomrule
\end{tabular}
\end{table}

\begin{table}[ht]
\centering
\caption{Values of TPRs and FPRs for scenario (d) under assignment independence across 100 replications.}\label{table tpr fpr under assignment independence for d}
\begin{tabular}{cllrr}
\toprule
\multicolumn{1}{l}{$n$} & & & \multicolumn{1}{c}{TPR} & \multicolumn{1}{c}{FPR} \\ \midrule
\multicolumn{1}{c}{\multirow{2}{*}{250}} & cSALMRM                &  & 0.285                   & 0.056                  \\
& cSALCWM &  & 0.655                   & 0.137                  \\
&  &  &                    &                   \\
\multicolumn{1}{c}{\multirow{2}{*}{500}} & cSALMRM                &  & 0.254                   & 0.031                  \\
& cSALCWM &  & 0.576                   & 0.055                  \\
 \bottomrule
\end{tabular}
\end{table}

\subsection{Comparison under assignment dependence}
The second simulation study aims to compare the models under assignment dependence. In this case, all four data-generating processes share the following parameters
\begin{align*}
    \mu_{X|1}=-3, \quad \mu_{X|2}=3,
    \quad 
    \Sigma_{X|1}=\Sigma_{X|2}=1, \quad
    \alpha_{X|1}=\alpha_{X|2}=0.2.
\end{align*}

Figure~\ref{Fig assignment dep} shows a single replication of the setup, where the black crosses indicate the original SALCWM data points that were replaced. The figure also displays the distinct atypical observations introduced in each scenario under assignment dependence.
\begin{figure}[ht]
    \centering
    \includegraphics[scale=0.6]{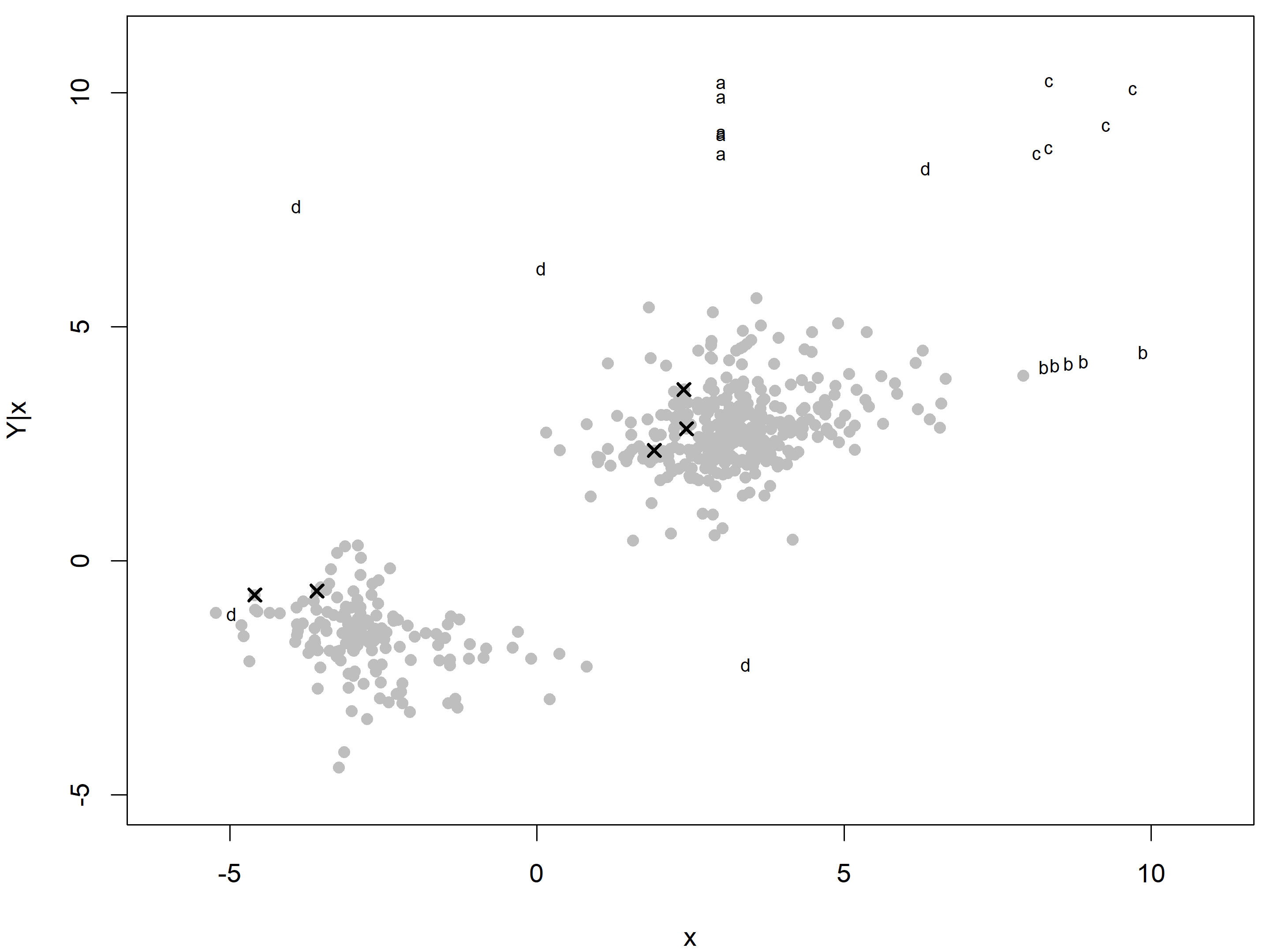}
    \caption{Example replication under assignment dependence. The black crosses indicate original SALCWM data points that have been replaced, while the corresponding replacements are shown for each scenario.}
    \label{Fig assignment dep}
\end{figure}

The bias and MSE values obtained under assignment dependence are reported in \tablename~\ref{table bias mse values under assignment dependence} for sample sizes $n=250$ and $n=500$. Regardless of the scenario, both the bias and MSE values for the CWMs are remarkably lower in absolute value than those of the MRMs. This is expected, as the data's clustering structure strongly depends on two well-separated component distributions for $X$ (reflecting assignment dependence). This underscores the necessity of models capable of accommodating random covariates, a feature naturally handled by CWMs but not by MRMs. This advantage is further corroborated by the TPR and FPR values given in \tablename~\ref{table tpr fpr under assignment dependence for a to c} and \ref{table tpr fpr under assignment dependence for d}, where it is evident that the cSALCWM consistently outperforms the cSALMRM across different scenarios, achieving notably higher TPRs while maintaining very low FPRs.

\begin{table}[ht]
\centering
\caption{Bias and MSE values under assignment dependence across 100 replications.}\label{table bias mse values under assignment dependence}
\begin{tabular}{lllrrrrrrrrrrr}
\toprule
                              &                       &              & \multicolumn{5}{c}{$n=250$}                                                                                                    & \multicolumn{1}{c}{} & \multicolumn{5}{c}{$n=500$}                                                                                                    \\ \cline{4-8} \cline{10-14} 
                              &                       &              & \multicolumn{2}{c}{MRM}                            & \multicolumn{1}{l}{} & \multicolumn{2}{c}{CWM}                            & \multicolumn{1}{c}{} & \multicolumn{2}{c}{MRM}                            & \multicolumn{1}{l}{} & \multicolumn{2}{c}{CWM}                            \\ \cline{4-5} \cline{7-8} \cline{10-11} \cline{13-14} 
                              &                       &              & \multicolumn{1}{c}{SAL} & \multicolumn{1}{c}{cSAL} & \multicolumn{1}{l}{} & \multicolumn{1}{c}{SAL} & \multicolumn{1}{c}{cSAL} & \multicolumn{1}{c}{} & \multicolumn{1}{c}{SAL} & \multicolumn{1}{c}{cSAL} & \multicolumn{1}{l}{} & \multicolumn{1}{c}{SAL} & \multicolumn{1}{c}{cSAL} \\ \midrule
\multirow{10}{*}{Scenario a)} & \multirow{5}{*}{Bias} & $\beta_{01}$ & 0.598                   & 0.596                    &                      & -0.024                  & -0.022                   &                      & 0.571                   & 0.569                    &                      & -0.009                  & -0.007                   \\
                              &                       & $\beta_{11}$ & 0.078                   & 0.076                    &                      & -0.003                  & -0.003                   &                      & 0.074                   & 0.074                    &                      & 0.002                   & 0.002                    \\
                              &                       & $\beta_{02}$ & -1.251                  & -1.231                   &                      & -0.038                  & -0.025                   &                      & -1.441                  & -1.406                   &                      & -0.023                  & -0.002                   \\
                              &                       & $\beta_{12}$ & 0.402                   & 0.401                    &                      & 0.001                   & 0.000                    &                      & 0.453                   & 0.450                    &                      & -0.005                  & -0.005                   \\
                              &                       & $\pi_1$      & -0.227                  & -0.228                   &                      & -0.003                  & -0.003                   &                      & -0.259                  & -0.259                   &                      & -0.003                  & -0.003                   \\
                              & \multirow{5}{*}{MSE}  & $\beta_{01}$ & 2.296                   & 2.296                    &                      & 0.049                   & 0.048                    &                      & 2.227                   & 2.209                    &                      & 0.021                   & 0.021                    \\
                              &                       & $\beta_{11}$ & 0.033                   & 0.033                    &                      & 0.005                   & 0.005                    &                      & 0.025                   & 0.026                    &                      & 0.002                   & 0.002                    \\
                              &                       & $\beta_{02}$ & 1.800                   & 1.765                    &                      & 0.028                   & 0.029                    &                      & 2.128                   & 2.031                    &                      & 0.011                   & 0.011                    \\
                              &                       & $\beta_{12}$ & 0.185                   & 0.185                    &                      & 0.002                   & 0.002                    &                      & 0.210                   & 0.207                    &                      & 0.001                   & 0.001                    \\
                              &                       & $\pi_1$      & 0.060                   & 0.060                    &                      & 0.000                   & 0.000                    &                      & 0.070                   & 0.070                    &                      & 0.000                   & 0.000                    \\
                              &                       &              &                         &                          &                      &                         &                          &                      &                         &                          &                      &                         &                          \\
\multirow{10}{*}{Scenario b)} & \multirow{5}{*}{Bias} & $\beta_{01}$ & 0.216                   & 0.215                    &                      & -0.024                  & -0.025                   &                      & 0.063                   & 0.063                    &                      & -0.010                  & -0.008                   \\
                              &                       & $\beta_{11}$ & 0.039                   & 0.038                    &                      & -0.001                  & -0.002                   &                      & 0.024                   & 0.023                    &                      & 0.003                   & 0.002                    \\
                              &                       & $\beta_{02}$ & -0.777                  & -0.778                   &                      & 0.019                   & 0.015                    &                      & -0.849                  & -0.853                   &                      & 0.020                   & 0.018                    \\
                              &                       & $\beta_{12}$ & 0.268                   & 0.268                    &                      & -0.004                  & -0.004                   &                      & 0.289                   & 0.290                    &                      & -0.003                  & -0.003                   \\
                              &                       & $\pi_1$      & -0.162                  & -0.163                   &                      & -0.003                  & -0.003                   &                      & -0.189                  & -0.190                   &                      & -0.003                  & -0.003                   \\
                              & \multirow{5}{*}{MSE}  & $\beta_{01}$ & 0.713                   & 0.715                    &                      & 0.049                   & 0.049                    &                      & 0.039                   & 0.039                    &                      & 0.020                   & 0.021                    \\
                              &                       & $\beta_{11}$ & 0.021                   & 0.021                    &                      & 0.005                   & 0.005                    &                      & 0.005                   & 0.005                    &                      & 0.002                   & 0.002                    \\
                              &                       & $\beta_{02}$ & 1.036                   & 1.036                    &                      & 0.014                   & 0.014                    &                      & 1.119                   & 1.127                    &                      & 0.004                   & 0.004                    \\
                              &                       & $\beta_{12}$ & 0.119                   & 0.119                    &                      & 0.000                   & 0.000                    &                      & 0.127                   & 0.128                    &                      & 0.000                   & 0.000                    \\
                              &                       & $\pi_1$      & 0.042                   & 0.043                    &                      & 0.000                   & 0.000                    &                      & 0.052                   & 0.052                    &                      & 0.000                   & 0.000                    \\
                              &                       &              &                         &                          &                      &                         &                          &                      &                         &                          &                      &                         &                          \\
\multirow{10}{*}{Scenario c)} & \multirow{5}{*}{Bias} & $\beta_{01}$ & 0.594                   & 0.591                    &                      & -0.009                  & -0.010                   &                      & 0.434                   & 0.435                    &                      & 0.038                   & 0.018                    \\
                              &                       & $\beta_{11}$ & 0.076                   & 0.075                    &                      & 0.004                   & 0.003                    &                      & 0.056                   & 0.056                    &                      & 0.019                   & 0.010                    \\
                              &                       & $\beta_{02}$ & -1.333                  & -1.329                   &                      & -0.115                  & -0.070                   &                      & -1.397                  & -1.398                   &                      & -0.126                  & -0.056                   \\
                              &                       & $\beta_{12}$ & 0.449                   & 0.448                    &                      & 0.028                   & 0.016                    &                      & 0.462                   & 0.462                    &                      & 0.031                   & 0.012                    \\
                              &                       & $\pi_1$      & -0.248                  & -0.249                   &                      & -0.003                  & -0.003                   &                      & -0.271                  & -0.271                   &                      & -0.002                  & -0.003                   \\
                              & \multirow{5}{*}{MSE}  & $\beta_{01}$ & 2.333                   & 2.333                    &                      & 0.096                   & 0.095                    &                      & 1.798                   & 1.798                    &                      & 0.139                   & 0.083                    \\
                              &                       & $\beta_{11}$ & 0.033                   & 0.033                    &                      & 0.010                   & 0.009                    &                      & 0.015                   & 0.015                    &                      & 0.017                   & 0.009                    \\
                              &                       & $\beta_{02}$ & 1.869                   & 1.870                    &                      & 0.043                   & 0.041                    &                      & 1.978                   & 1.981                    &                      & 0.027                   & 0.015                    \\
                              &                       & $\beta_{12}$ & 0.210                   & 0.211                    &                      & 0.003                   & 0.003                    &                      & 0.216                   & 0.216                    &                      & 0.002                   & 0.001                    \\
                              &                       & $\pi_1$      & 0.066                   & 0.066                    &                      & 0.000                   & 0.000                    &                      & 0.075                   & 0.075                    &                      & 0.000                   & 0.000                    \\
                              &                       &              &                         &                          &                      &                         &                          &                      &                         &                          &                      &                         &                          \\
\multirow{10}{*}{Scenario d)} & \multirow{5}{*}{Bias} & $\beta_{01}$ & 0.231                   & 0.231                    &                      & -0.069                  & -0.049                   &                      & 0.040                   & 0.041                    &                      & -0.053                  & -0.023                   \\
                              &                       & $\beta_{11}$ & 0.029                   & 0.028                    &                      & -0.016                  & -0.013                   &                      & 0.000                   & 0.000                    &                      & -0.012                  & -0.005                   \\
                              &                       & $\beta_{02}$ & -1.054                  & -1.069                   &                      & 0.025                   & 0.019                    &                      & -1.285                  & -1.311                   &                      & 0.032                   & 0.027                    \\
                              &                       & $\beta_{12}$ & 0.354                   & 0.360                    &                      & -0.004                  & -0.003                   &                      & 0.425                   & 0.434                    &                      & -0.008                  & -0.007                   \\
                              &                       & $\pi_1$      & -0.209                  & -0.212                   &                      & 0.002                   & 0.001                    &                      & -0.261                  & -0.266                   &                      & 0.003                   & 0.003                    \\
                              & \multirow{5}{*}{MSE}  & $\beta_{01}$ & 0.877                   & 0.879                    &                      & 0.064                   & 0.060                    &                      & 0.097                   & 0.097                    &                      & 0.038                   & 0.031                    \\
                              &                       & $\beta_{11}$ & 0.020                   & 0.020                    &                      & 0.006                   & 0.006                    &                      & 0.016                   & 0.016                    &                      & 0.003                   & 0.002                    \\
                              &                       & $\beta_{02}$ & 1.456                   & 1.479                    &                      & 0.030                   & 0.029                    &                      & 1.764                   & 1.804                    &                      & 0.012                   & 0.011                    \\
                              &                       & $\beta_{12}$ & 0.161                   & 0.164                    &                      & 0.002                   & 0.002                    &                      & 0.192                   & 0.196                    &                      & 0.001                   & 0.001                    \\
                              &                       & $\pi_1$      & 0.055                   & 0.056                    &                      & 0.000                   & 0.000                    &                      & 0.072                   & 0.074                    &                      & 0.000                   & 0.000                    \\ \bottomrule
\end{tabular}
\end{table}

\begin{table}[ht]
\centering
\caption{Values of TPRs and FPRs under for scenarios (a)-(c), under assignment dependence across 100 replications.}\label{table tpr fpr under assignment dependence for a to c}
\begin{tabular}{llrrrrrrr}
\toprule
                                         &               & \multicolumn{3}{c}{TPR}                                                                             & \multicolumn{1}{l}{} & \multicolumn{3}{c}{FPR}                                                                             \\ \cline{3-5} \cline{7-9} 
\multicolumn{1}{c}{$n$}                                          &               & \multicolumn{1}{l}{(a)} & \multicolumn{1}{l}{(b)} & \multicolumn{1}{l}{(c)} & \multicolumn{1}{l}{} & \multicolumn{1}{l}{(a)} & \multicolumn{1}{l}{(b)} & \multicolumn{1}{l}{(c)} \\ \midrule
\multicolumn{1}{c}{\multirow{3}{*}{250}} & cSALMRM: Outliers      & 0.875                           &                                 &                                 &                      & 0.005                           & 0.020                           & 0.009                           \\
\multicolumn{1}{c}{}                     & cSALCWM: Outliers & 0.960                                &                            &                                 &                      & 0.028                           & 0.036                           & 0.048                           \\
\multicolumn{1}{c}{}                     & cSALCWM: Good leverage &                                 & 0.580                           &                                 &                      & 0.045                           & 0.055                           & 0.067                           \\

\multicolumn{1}{c}{}                     & cSALCWM: Bad leverage  &                                 &                                 & 0.750                           &                      & 0.001                           & 0.007                           & 0.005                           \\
                                         &               &                                 &                                 &                                 &                      &                                 &                                 &                                 \\
\multirow{3}{*}{500}                     & cSALMRM: Outliers      & 0.970                           &                                 &                                 &                      & 0.002                           & 0.055                           & 0.000                           \\
                                         & cSALCWM: Outliers &0.980                                 &                            &                                 &                      & 0.025                           & 0.075                           & 0.035                           \\
                                         &cSALCWM: Good leverage &                                 & 0.770                           &                                 &                      & 0.057                           & 0.041                           & 0.044                           \\
                                         &cSALCWM: Bad leverage  &                                 &                                 & 0.954                           &                      & 0.001                           & 0.006                           & 0.001                           \\ \bottomrule
\end{tabular}
\end{table}

\begin{table}[ht]
\centering
\caption{Values of TPRs and FPRs for scenario (d) under assignment dependence across 100 replications.}\label{table tpr fpr under assignment dependence for d}
\begin{tabular}{cllrr}
\toprule
\multicolumn{1}{l}{$n$} & & & \multicolumn{1}{c}{TPR} & \multicolumn{1}{c}{FPR} \\ \midrule
\multicolumn{1}{c}{\multirow{2}{*}{250}} & cSALMRM                &  & 0.185                   & 0.014                  \\
& cSALCWM &  & 0.515                   & 0.099                  \\
&  &  &                    &                   \\
\multicolumn{1}{c}{\multirow{2}{*}{500}} & cSALMRM                &  & 0.154                   & 0.004                  \\
& cSALCWM &  & 0.548                   & 0.093                  \\
 \bottomrule
\end{tabular}
\end{table}

\section{Application} \label{Section Application}
In this application, we conduct a sensitivity analysis based on the Australian Institute of Sport (AIS) dataset.
The AIS data, available in the \texttt{sn} package \citep{azzalini2020r} in \texttt{R}, contains measurements on 102 male and 100 female athletes ($n=202$, $G=2$), and has previously been studied in the context of mixture models \citep{otto2024refreshing,soffritti2011multivariate,li2016robust, gallaugher2022multivariate}. We focus on a subset of six variables, which include red cell count (RCC), white cell count (WCC), body mass index (BMI), sum of skin folds (SSF), body fat percentage (BFT), and lean body mass (LBM). The blood-related variables (RCC and WCC) are treated as responses ($\bY$), while the biometric variables (BMI, SSF, BFT, and LBM) serve as covariates ($\bX$). The pairwise scatter plots of the data, coloured according to sex, with sex being considered as the variable defining the group structure, are given in \figurename~\ref{fig ais}.

\subsection{Australian Institute of Sport data}
As an initial step, we fit the SALCWM and cSALCWM to the AIS data.
The BIC values of the SALCWM and cSALCWM are presented in \tablename~\ref{Table aic bic} and compared to the Gaussian CWM (GCWM) and CGCWM for different numbers of components ($G=1,2,3$). We note that the models with $G=2$ components consistently yield the lowest BIC values across all three considered CWMs. Among them, the SALCWM achieves the lowest overall BIC, indicating the best model fit according to this criterion. As discussed in Section \ref{Sec comparison with mixtures of laplace regression}, direct comparison between CWMs and MRMs are not appropriate. Consequently, BIC values for the SALMRM, cSALMRM, GMRM, and CGMRM are not included in the comparison.
However, these models are included when evaluating the clustering performance using the ARI, as shown in \tablename~\ref{Table ais ari}.  In this context,  all the MRMs exhibit very low ARI values, reflecting poor clustering performance on the \texttt{AIS} dataset.
The performance of the CWMs -- particularly the SALCWM and cSALCWM -- is notably superior. Both the SALCWM and cSALCWM achieve the highest ARI value of 0.961, indicating strong agreement with the actual class labels (sex). However, the cSALCWM does not detect any outliers, which is consistent with the inflation parameter estimates of the cSALCWM $\hat\boldeta_{\bX}$ and $\hat\boldeta_{\bY}$ both being very close to $\boldsymbol{1}$.


\begin{figure}[ht]
    \centering
    \includegraphics[width=1\linewidth]{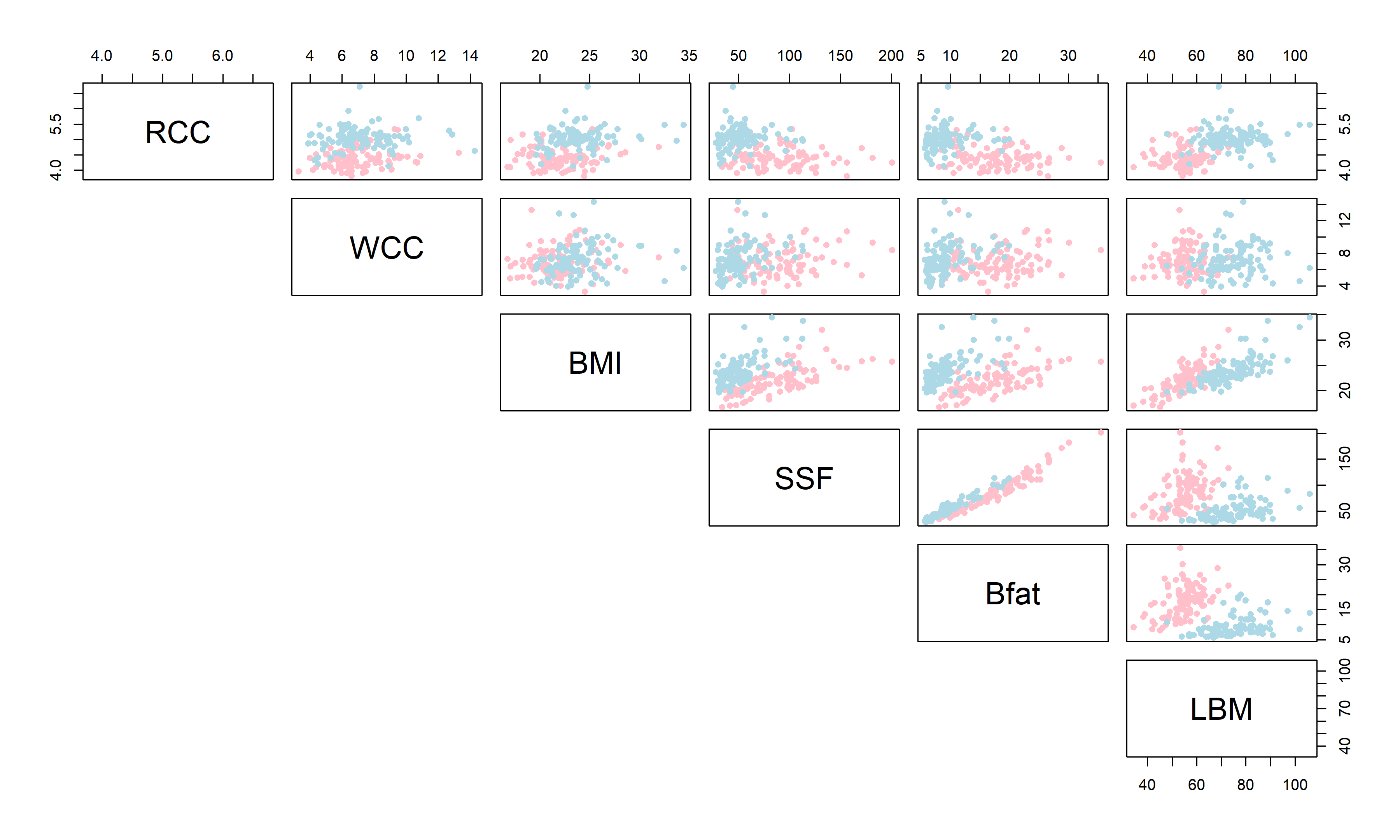}
    \caption{Pairwise scatter plots of the \texttt{AIS} dataset, with points colour-coded by actual sex: blue for males and pink for females.}
    \label{fig ais}
\end{figure}

\begin{table}[ht]
\centering
\caption{BIC values of CWMs on the \texttt{AIS} dataset for different number of components.}
\label{Table aic bic}
\begin{tabular}{lrrr}
\toprule
        & \multicolumn{1}{c}{$G=1$} & \multicolumn{1}{c}{$G=2$} & \multicolumn{1}{c}{$G=3$} \\ \midrule
GCWM   & 6074.424                  & 5975.203                 & 5989.772                  \\
CGCWM   & 6095.624                  & 5986.094                  & 6022.245                  \\
SALCWM  & 6006.121                  & \textbf{5907.134}                  & 6011.640                  \\
cSALCWM & 6027.354                  & 5949.600                 & 6075.339                  \\ \bottomrule
\end{tabular}
\end{table}
\begin{table}[ht]
\centering
\caption{Comparison of the ARI performance of the best fitting models according to the BIC on the \texttt{AIS} dataset. }
\label{Table ais ari}
\begin{tabular}{lr}
\toprule
Model   & \multicolumn{1}{c}{ARI} \\ \midrule
CGMRM   & 0.000                   \\
CGMRM   & 0.000                   \\
SALMRM  & 0.010                   \\
cSALRM  & 0.010                   \\
GCWM   & 0.657                   \\
CGCWM   & 0.707                   \\
\textbf{SALCWM}  & \textbf{0.961}                   \\
\textbf{cSALCWM} & \textbf{0.961}                   \\ \bottomrule
\end{tabular}
\end{table}

\subsection{Australian Institute of Sport data with noise}
To illustrate the effect of atypical value on model performance using real data, we conduct a sensitivity analysis by adding noise and assessing its impact on both model fitting and clustering performance. To this end, we modify the original \texttt{AIS} data by including 10 noisy points from a uniform distribution over the hyper-rectangle. The hyper-rectangle  $(3,10)\times(2,20)\times(15,40)\times(25,230)\times(5,45)\times(30,120)$ is chosen to contain all the data. While alternative strategies could guide the selection of these intervals, simply choosing intervals that exceed the empirical ranges given in \tablename~\ref{table range AIS} suffices to illustrate the impact of atypical values. The pairwise scatter plots of the data, coloured according to the true data classification, and illustrating the contaminated points, are given in \figurename~\ref{fig ais with noise}. 

\begin{table}[ht]
\caption{Range of the variables in the \texttt{AIS} dataset.}\label{table range AIS}
\centering
\begin{tabular}{lrrrrrr}
\toprule
    & \multicolumn{1}{c}{RCC} & \multicolumn{1}{c}{WCC} & \multicolumn{1}{c}{BMI} & \multicolumn{1}{c}{SSF} & \multicolumn{1}{c}{Bfat} & \multicolumn{1}{c}{LBM} \\ \midrule
min & 3.800                   & 3.300                   & 16.750                  & 28.000                  & 5.630                    & 34.360                  \\
max & 6.720                   & 14.300                  & 34.420                  & 200.800                 & 35.520                   & 106.000                 \\ \bottomrule
\end{tabular}
\end{table}

\begin{figure}[ht]
    \centering
\includegraphics[width=1\linewidth]{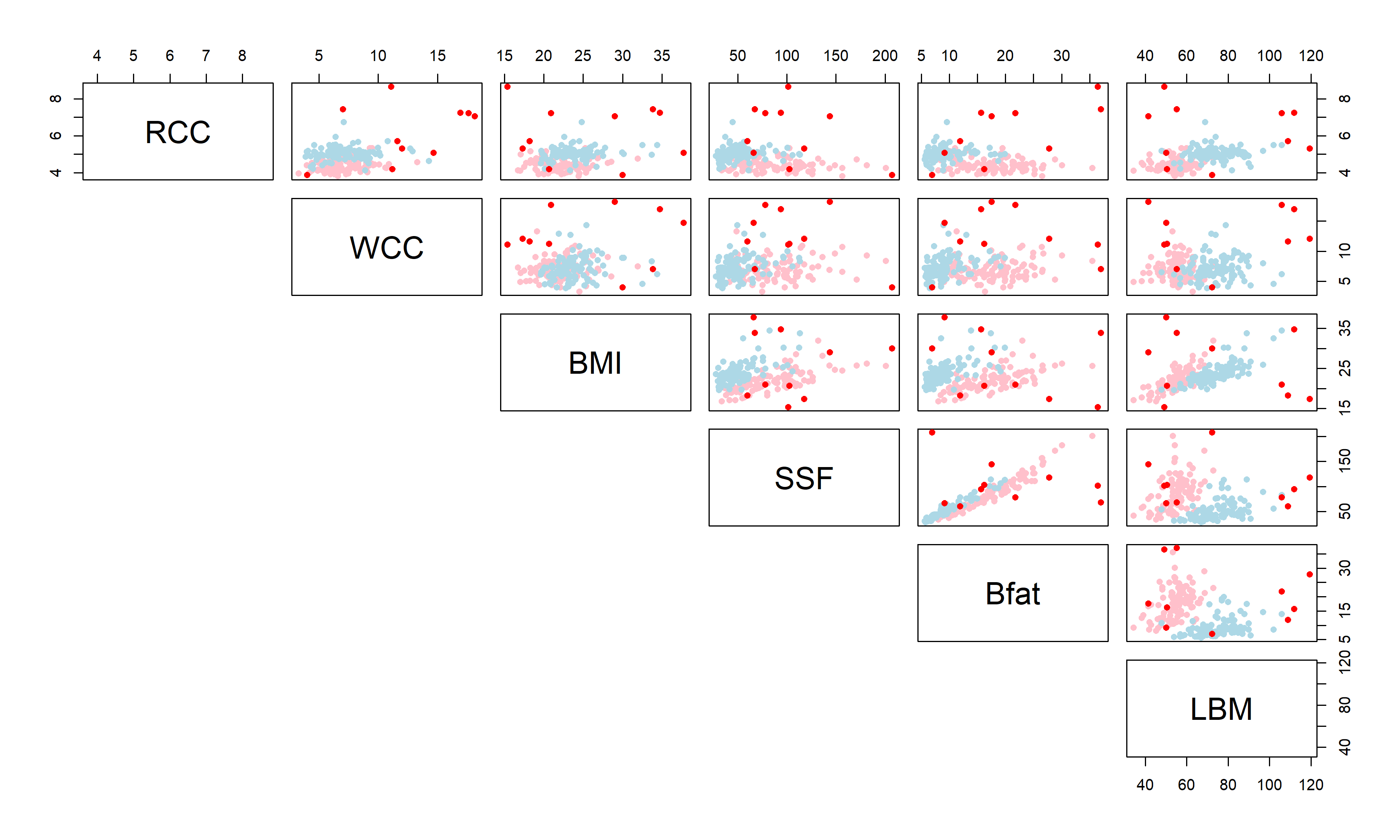}
    \caption{Pairwise scatter plots of the \texttt{AIS} dataset with 20 uniform noise points indicated in red.}
    \label{fig ais with noise}
\end{figure}

\tablename~\ref{Table aic bic contaminated AIS} presents the BIC values, in correspondence of $G\in \{1,2,3\}$, for the CGCWM, SALCWM and cSALCWM for the contaminated \texttt{AIS} dataset. Again, both the SALCWM and cSALCWM outperformed the CGCWM, regardless of the number of components considered. However, in this case, the best fitting model is the cSALCWM with $G=2$, with contamination parameter estimates: proportion of leverage $\hat\bdelta_{\bX}=(0.083,0.010),$ degree of leverage $\hat\boldeta_{\bX}=(43.084,1.000),$ proportion of outliers $\hat\bdelta_{\bY}=(0.050,0.045),$  and degree of outlierness $\hat\boldeta_{\bY}=(29.450,15.568)$. 9 of the 10 added noise points were detected as atypical values (see \tablename~\ref{table rule of points in a regression analysis}).
To assess the clustering performance of the models in the presence of atypical values, we compute the ARI using only the original AIS observations. Again, both the SALCWM and cSALCWM outperform the classical model; however, in this case, the cSALCWM achieves the highest ARI, confirming its enhanced ability to handle atypical values effectively.

\begin{table}[ht]
\centering
\caption{BIC values of CWMs on the \texttt{AIS} dataset with noise for different number of components.}
\label{Table aic bic contaminated AIS}
\begin{tabular}{lrrr}
\toprule
        & \multicolumn{1}{c}{$G=1$} & \multicolumn{1}{c}{$G=2$} & \multicolumn{1}{c}{$G=3$} \\ \midrule
GCWM   &  7314.430                 &    7026.724               & 6615.774                  \\
CGCWM   &  7230.856                 &    6610.410               & 6635.030                  \\
SALCWM  &  6789.344                 &     6552.267              &  6525.261                \\
cSALCWM &  6613.455                 &    \textbf{6523.194}               & 6544.719                  \\ \bottomrule
\end{tabular}
\end{table}

\begin{table}[ht]
\centering
\caption{Comparison of the ARI performance of the best fitting models according to the BIC on the \texttt{AIS} dataset with noise. }
\label{Table ais ari contaminated AIS}
\begin{tabular}{lr}
\toprule
Model   & \multicolumn{1}{c}{ARI} \\ \midrule
GMRM   & 0.000                   \\
CGMRM   & 0.001                   \\
SALMRM  & 0.005                   \\
cSALRM  & 0.006                   \\
GCWM   & 0.657                   \\
CGCWM   & 0.775                   \\
SALCWM  & 0.866                   \\
\textbf{cSALCWM} & \textbf{0.903}                   \\ \bottomrule
\end{tabular}
\end{table}

\section{Discussion}\label{Section conclusion}
In this paper, we introduced the shifted asymmetric Laplace (SAL) cluster-weighted model (SALCWM) as a mixture of multivariate linear regression models with multiple random covariates that allows for clustering of responses and covariates exhibiting varying degrees of skewness, which can vary across clusters. To address the impact of atypical observations, we further propose the contaminated SALCWM (cSALCWM), which simultaneously mitigates the influence of atypical values and facilitates their detection, distinguishing specifically between typical points, mild outliers, good leverage points, and bad leverage points.

An expectation-conditional maximization algorithm is discussed for maximum likelihood parameter estimation, supported by identifiability conditions. Simulation studies also illustrate the effectiveness of the cSALCWM in both mitigating the impact of atypical points and accurately identifying them. The proposed CWMs are also compared to their corresponding finite mixture of regression models (MRMs), where covariates are treated as fixed. Specifically, it is shown that by ignoring the distribution of the covariates, MRMs may fail to recognize the underlying group structure in the data. A real-world application to sports data further illustrates the potential of the proposed models. Importantly, the models are applicable across a wide range of fields, without being restricted to any specific domain.

Possible extensions of this work could include exploring parsimonious structures for the scale matrices for the SAL and cSAL distribution, as suggested by \citet{tortora2024laplace}, and incorporating them into a cluster-weighted modelling framework, similar to the approach of \cite{dang2017multivariate} for the Gaussian CWM.

\section*{Acknowledgments}
This work was based upon research supported in part by the National Research Foundation (NRF) of South Africa (SA), grant RA201125576565, nr 145681 (Ferreira); NRF ref.~SRUG2204203965, UID 119109 (Bekker); the Department of Research and Innovation at the University of Pretoria (SA), the Faculty of Science at the University of the Witwatersrand, as well as the DSI-NRF Centre of Excellence in Mathematical and Statistical Sciences (CoE-MaSS). The opinions expressed and conclusions arrived at are those of the authors and are not necessarily to be attributed to the NRF. In addition, this work was supported by \textit{NSF grant N0. 2209974} (Tortora). Punzo acknowledges the support by the Italian Ministry of University and Research (MUR) under the PRIN 2023 grant number 2022XRHT8R (CUP: E53D23005950006), as part of ‘The SMILE Project: Statistical Modelling and Inference to Live the Environment’, funded by the European Union – Next Generation EU.

\section*{Data availability statement}
All datasets considered in this paper are freely available on the internet.

\section*{Disclosure statement}
The authors declared no potential conflicts of interest with respect to the research, authorship, and/or publication of this article.

\clearpage
\bibliographystyle{chicago}
\bibliography{database.bib}

\appendix 

\section{Proofs} \label{Sec ProofinApp}

\renewcommand{\thefigure}{A.\arabic{figure}}
\setcounter{figure}{0}

\subsection{Proof of Proposition 1}\label{Section appendix proof proposition 1}
\begin{proof}
Under the assumptions of the Proposition \ref{proposition mix of regression special case of CWM}, \eqref{eq csal mixture of regression dynamic weights} simplifies as
\begin{align*}
    p(\by|\bx;\btheta)=&\frac{1}{{\sum\limits_{j=1}^G}\pi_jf_{\text{cSAL}}\left(\bx;\bmu_{\bX},\bSigma_{\bX},\balpha_{\bX},\delta_{\bX},\eta_{\bX}\right)}\notag\\
    &\times\sum_{g=1}^G\pi_gf_{\text{cSAL}}\left(\bx;\bmu_{\bX},\bSigma_{\bX},\balpha_{\bX},\delta_{\bX},\eta_{\bX}\right)f_{\text{cSAL}}\left(\bx;\bmu\left(\bx;\bbeta_g\right),\bSigma_{\bY|g},\balpha_{\bY|g},\delta_{\bY|g},\eta_{\bY|g}\right)\\
    &=\sum_{g=1}^G\pi_gf_{\text{cSAL}}\left(\bx;\bmu\left(\bx;\bbeta_g\right),\bSigma_{\bY|g},\balpha_{\bY|g},\delta_{\bY|g},\eta_{\bY|g}\right),
\end{align*}
which corresponds to the conditional distribution from a mixture of cSAL regression models as defined in \eqref{eq mixtures of cSAL regression}.
\end{proof}
\subsection{Proof of Proposition 2}\label{Section Appendix identifiability}
\begin{proof}Suppose that 
\begin{align}\label{eq proof identifiability}
    p(\bx,\by;\btheta)=p(\bx,\by;\tilde{\btheta})
\end{align}
where $p(\cdot)$ is defined in \eqref{eq general cwm}. Integrating over $\by$ from \eqref{eq proof identifiability} yields
\begin{align}\label{eq proof identifiability 2}
    \sum_{g=1}^G\pi_g f\left(\bx;\bmu_{\bX|g},\bSigma_{\bX|g},\balpha_{\bX|g},\delta_{\bX|g},\eta_{\bX|g}\right)&= \sum_{j=1}^{\tilde{G}}\tilde{\pi}_g f\left(\bx;\tilde{\bmu}_{\bX|g},\tilde{\bSigma}_{\bX|g},\tilde{\balpha}_{\bX|g},\tilde{\delta}_{\bX|g},\tilde{\eta}_{\bX|g}\right)\notag\\
    p(\bx;\bpi,\btheta_{\bX})&=p(\bx;\tilde{\bpi},\tilde{\btheta}_{\bX})
\end{align}
corresponding to the marginal distribution of $\bX$, $p(\bx;\bmu,\bSigma,\balpha,\delta,\eta)$.
Dividing the left-hand (right-hand) side of \eqref{eq proof identifiability} by the left-hand (right-hand) side of \eqref{eq proof identifiability 2} leads to
\begin{align}\label{eq proof identifiability 3}
    &\sum_{g=1}^{G}\frac{\pi_gf\left(\bx;\bmu_{\bX|g},\bSigma_{\bX|g},\balpha_{\bX|g},\delta_{\bX|g},\eta_{\bX|g}\right)}{\sum_{t=1}^G\pi_gf\left(\bx;\bmu_{\bX|t},\bSigma_{\bX|t},\balpha_{\bX|t},\delta_{\bX|t},\eta_{\bX|t}\right)}f\left(\by;\bmu_{\bY|g},\bSigma_{\bY|g},\balpha_{\bY|g},\delta_{\bY|g},\eta_{\bY|g}\right)\notag\\
    &=\sum_{j=1}^{\tilde{G}}\frac{\tilde{\pi}_jf\left(\bx;\tilde{\bmu}_{\bX|j},\tilde{\bSigma}_{\bX|j},\tilde{\balpha}_{\bX|j},\tilde{\delta}_{\bX|j},\tilde{\eta}_{\bX|j}\right)}{\sum_{t=1}^G\tilde{\pi}_gf\left(\bx;\tilde{\bmu}_{\bX|t},\tilde{\bSigma}_{\bX|t},\tilde{\balpha}_{\bX|t},\tilde{\delta}_{\bX|t},\tilde{\eta}_{\bX|t}\right)}f\left(\by;\tilde{\bmu}_{\bY|j},\tilde{\bSigma}_{\bY|j},\tilde{\balpha}_{\bY|j},\tilde{\delta}_{\bY|j},\tilde{\eta}_{\bY|j}\right)\notag\\
    &p(\by|\bx;\btheta)=p(\by|\bx;\tilde{\btheta}).
\end{align}
For each fixed value of $\bx$, $p(\by|\bx;\btheta)$ and $p(\by|\bx;\tilde{\btheta})$ are mixtures of $d_{\bY}$-variate cSAL distributions for $\bY$.
Now, recall from Section \ref{Section Methodological} that the location parameter $\bmu_{\bY|g}$ of the $d_{\bY}$-variate cSAL distribution of $\bY$ in the $g$-th mixture component is related to the covariates $\bX$, through the regression coefficients $\bbeta_g$, by the relation $\bbeta_g^\top\bx^*$, $g=1,\dots,G.$ Define the set of all covariates points $\bx$ which can be used to distinct different regression coefficients $\bbeta_g$ by different values  of $\bmu_{\bY}(\bx;\bbeta_g)$ as
\begin{align*}
    \chi=\left\{\bx \in \mathcal{R}^{d_{\bX}}\right.  : &\left. \forall g, l\in\left\{1,\dots,g\right\}\text{ and } s,t\in\left\{1,\dots,\tilde{g}\right\},\right.\\
    &\left.\bmu_{\bY}\left(\bx;\bbeta_g\right)=\bmu_{\bY}\left(\bx;\bbeta_l\right) \implies \bbeta_g=\bbeta_l,\right.\\
    &\left.\bmu_{\bY}\left(\bx;\bbeta_g\right)=\bmu_{\bY}\left(\bx;\tilde{\bbeta}_s\right) \implies 
    \bbeta_g=\tilde{\bbeta}_s,\right.\\
 &\left.\bmu_{\bY}\left(\bx;\tilde{\bbeta}_s\right)=\bmu_{\bY}\left(\bx;\tilde{\bbeta}_t\right) \implies \tilde{\bbeta}_t=\tilde{\bbeta}_s
    \right\}.
\end{align*}
Note that $\chi$ is the complement of a finite union of hyperplanes of $\mathcal{R}^{d_{\bX}}$. Therefore,
\begin{align*}
    \int_{\chi}p(\bpi,\bmu_{\bX},\bSigma_{\bX},\balpha_{\bX},\bdelta_{\bX},\boldeta_{\bX})d\bx=1.
\end{align*}
For $\bx\in\chi$, all pairs $\left(\bmu_{\bY}\left(\bx;\bbeta_g\right),\bSigma_{\bY|g},\balpha_{\bY|g},\delta_{\bY|g},\eta_{\bY|g}\right)$, $g=1,\dots,G$, are pairwise distinct because all $\left(\bbeta_g,\bSigma_{\bY|g},\balpha_{\bY|g},\delta_{\bY|g},\eta_{\bY|g}\right)$, $g=1,\dots,G$, are pairwise distinct for the hypothesis of the theorem. Since for each fixed value of $\bx$, model \eqref{eq proof identifiability 3} is a mixture of $d_{\bY}$-variate cSAL distributions, which being identifiable \citep{tortora2024laplace} implies that $G=\tilde{G}$ and that, for each $g \in \left\{1, \dots,G\right\}$, there exists a $j \in \left\{1, \dots,G\right\}$ such that
\begin{align*}
    \bbeta_g=\tilde{\bbeta}_j, \quad \bSigma_{\bY|g}=\tilde{\bSigma}_{\bY|j}, \quad \balpha_{\bY|g}=\tilde{\balpha}_{\bY|j}, \quad \delta_{\bY|g}=\tilde\delta_{\bY|j}, \quad \eta_{\bY|g}=\tilde\eta_{\bY|j}
\end{align*}
and
\begin{align}\label{eq proof identifiability 4}
\frac{\pi_gf\left(\bx;\bmu_{\bX|g},\bSigma_{\bX|g},\balpha_{\bX|g},\delta_{\bX|g},\eta_{\bX|g}\right)}{\sum\limits_{t=1}^G\pi_gf\left(\bx;\bmu_{\bX|t},\bSigma_{\bX|t},\balpha_{\bX|t},\delta_{\bX|t},\eta_{\bX|t}\right)}=\frac{\tilde{\pi}_jf\left(\bx;\tilde{\bmu}_{\bX|j},\tilde{\bSigma}_{\bX|j},\tilde{\balpha}_{\bX|j},\tilde{\delta}_{\bX|j},\tilde{\eta}_{\bX|j}\right)}{\sum\limits_{t=1}^G\tilde{\pi}_gf\left(\bx;\tilde{\bmu}_{\bX|t},\tilde{\bSigma}_{\bX|t},\tilde{\balpha}_{\bX|t},\tilde{\delta}_{\bX|t},\tilde{\eta}_{\bX|t}\right)}.
\end{align}
Now, based on \eqref{eq proof identifiability 3}, the equality in \eqref{eq proof identifiability 4} simplifies to
\begin{align}\label{eq proof identifiability 5}
\pi_gf\left(\bx;\bmu_{\bX|g},\bSigma_{\bX|g},\balpha_{\bX|g},\delta_{\bX|g},\eta_{\bX|g}\right)=\tilde{\pi}_jf\left(\bx;\tilde{\bmu}_{\bX|j},\tilde{\bSigma}_{\bX|j},\tilde{\balpha}_{\bX|j},\tilde{\delta}_{\bX|j},\tilde{\eta}_{\bX|j}\right),\quad \forall\bx\in\chi.
\end{align}
Integrating \eqref{eq proof identifiability 5} over $\bx\in\chi$ yields $\pi_g=\tilde\pi_j$. Therefore, condition \eqref{eq proof identifiability 5} further simplifies to 
\begin{align*}
f\left(\bx;\bmu_{\bX|g},\bSigma_{\bX|g},\balpha_{\bX|g},\delta_{\bX|g},\eta_{\bX|g}\right)=f\left(\bx;\tilde{\bmu}_{\bX|j},\tilde{\bSigma}_{\bX|j},\tilde{\balpha}_{\bX|j},\tilde{\delta}_{\bX|j},\tilde{\eta}_{\bX|j}\right),\quad \forall\bx\in\chi.
\end{align*}
The equalities $\bmu_{\bX|g}=\tilde{\bmu}_{\bX|j},$ $ \bSigma_{\bX|g}=\tilde{\bSigma}_{\bX|j}, $ $ \balpha_{\bX|g}=\tilde{\balpha}_{\bX|j},$  $ \delta_{\bX|g}=\tilde\delta_{\bX|j}, $ $ \eta_{\bX|g}=\tilde\eta_{\bX|j}$ simply arise of the identifiability of cSAL distribution, and this completes the proof.
\end{proof}

\subsection{Updates in the first CM step}

\begin{proof}
The updates for $\bmu_{\bX}$ and $\balpha_{\bX}$ can be obtained through the first partial derivatives
    \begin{align}
        &\frac{\partial Q_3\left(\bmu_{\bX},\bSigma_{\bX},\balpha_{\bX},\boldeta_{\bX}|\btheta^{(r)}\right)}{\partial\bmu_{\bX|g}}\notag\\
        &=-\frac{1}{2}\sum^{n}_{i=1}z_{ig}^{(r)}\left(\left(1-v_{ig}^{(r)}\right)\mathrm{E}_{2_{\bX|ig}}+\frac{v_{ig}^{(r)}}{\eta_{\bX|g}^{(r)}}\mathrm{\tilde{E}}_{2_{\bX|ig}}\right)\left(-2\bSigma_{\bX|g}^{-1}\left(\bx_i-\bmu_{\bX|g}\right)\right)\notag\\
        &-\sum^{n}_{i=1}z_{ig}^{(r)}\left(1-v_{ig}^{(r)}+\frac{v_{ig}^{(r)}}{\sqrt{\eta_{\bX|g}^{(r)}}}\right)\left(-\bSigma_{\bX|g}^{-1}\balpha_{\bX|g}\right)\label{eq Q3partialmu},
    \end{align}

        \begin{align}
        &\frac{\partial Q_3\left(\bmu_{\bX},\bSigma_{\bX},\balpha_{\bX},\boldeta_{\bX}|\btheta^{(r)}\right)}{\partial\balpha_{\bX|g}}\notag\\
        &=\sum^{n}_{i=1}z_{ig}\left(1-v_{ig}^{(r)}+\frac{v_{ig}^{(r)}}{\sqrt{\eta_{\bX|g}^{(r)}}}\right)\left(\bSigma_{\bX|g}^{-1}\left(\bx_i-\bmu_{\bX|g}\right)\right)\notag\\
        &-\sum^n_{i=1}z_{ig}\left(\left(1-v_{ig}^{(r)}\right)\mathrm{E}_{1_{\bX|ig}}+v_{ig}^{(r)}\mathrm{\tilde{E}}_{1_{\bX|ig}}\right)\bSigma_{\bX|g}^{-1}\balpha_{\bX|g},\quad g=1,\dots,k.\label{eq Q3partialalpha}
    \end{align}
    \text{By equating \eqref{eq Q3partialmu} and \eqref{eq Q3partialalpha} to the null vector and simultaneously solving for $\bmu_{\bX}$ and $\balpha_{\bX}$, we obtain the updates. }
\end{proof}

\begin{proof}
Similarly, updates $\balpha_{\bY}$ can be obtained through the first partial derivatives   

    \begin{align}
        &\frac{\partial Q_{5}\left(\bbeta,\bSigma_{\bY},\balpha_{\bY},\boldeta_{\bY}|\btheta^{(r)}\right)}{\partial\balpha_{\bY|g}}\notag\\
        &=\frac{\partial}{\partial\balpha_{\bY|g}}\left[\sum_{i=1}^n\sum_{g=1}^Gz_{ig}^{(r)}\left(\left(1-u_{ig}^{(r)}\right)+\frac{u_{ig}^{(r)}}{\sqrt{\eta_{\bY|g}}}\right) \left(\by_i-\bbeta_g^\top \bx_{i}^*\right)^\top\bSigma^{-1}_{\bY|g}\balpha_{\bY|g}\right.\notag\\
        &\left. -\frac{1}{2}\sum_{i=1}^n\sum_{g=1}^Gz_{ig}^{(r)}\left(\left(1-u_{ig}^{(r)}\right)\mathrm{E}_{1_{\bY|ig}}+u_{ig}^{(r)}\mathrm{\tilde{E}}_{1_{\bY|ig}}\right)\balpha^\top_{\bY|g}\bSigma^{-1}_{\bY|g}\balpha_{\bY|g}\right]\notag\\
        &=\sum_{i=1}^n z_{ig}^{(r)}\left(\left(1-u_{ig}^{(r)}\right)+\frac{u_{ig}^{(r)}}{\sqrt{\eta_{\bY|g}}}\right) \bSigma^{-1}_{\bY|g}\left(\by_i-\bbeta_g^\top \bx_{i}^*\right)\notag\\
        & -\frac{1}{2}\sum_{i=1}^nz_{ig}^{(r)}\left(\left(1-u_{ig}^{(r)}\right)\mathrm{E}_{1_{\bY|ig}}+u_{ig}^{(r)}\mathrm{\tilde{E}}_{1_{\bY|ig}}\right)\left(2\bSigma^{-1}_{\bY|g}\balpha_{\bY|g}\right)\label{eq alphaY partial derivative}.
    \end{align}
Equating \eqref{eq alphaY partial derivative} to the null vector yields
\begin{align}
    \balpha_{\bY|g}=\frac{\sum_{i=1}^nz_{ig}^{(r)}\left(\left(1-u_{ig}^{(r)}\right)+\frac{u_{ig}^{(r)}}{\sqrt{\eta_{\bY|g}}}\right) \left(\by_i-\bbeta_g^\top \bx_{i}^*\right)}{\sum_{i=1}^nz_{ig}^{(r)}\left(\left(1-u_{ig}^{(r)}\right)\mathrm{E}_{1_{\bY|ig}}+u_{ig}^{(r)}\mathrm{\tilde{E}}_{1_{\bY|ig}}\right)}\label{eq alphay appendix}.
\end{align}
An update for $\bbeta$ is obtained by taking the first partial derivative
 \begin{align}
        &\frac{\partial Q_{5}\left(\bbeta,\bSigma_{\bY},\balpha_{\bY},\boldeta_{\bY}|\btheta^{(r)}\right)}{\partial\bbeta_g}\notag\\
        &=\frac{\partial}{\partial\bbeta_g}\left[-\frac{1}{2}\sum_{i=1}^n\sum_{g=1}^Gz_{ig}^{(r)}\left(\left(1-u_{ig}^{(r)}\right)\mathrm{E}_{2_{\bY|ig}}+\frac{u_{ig}^{(r)}}{\eta_{\bY|g}}\mathrm{\tilde{E}}_{2_{\bY|ig}}\right)\left(\by_i-\bbeta_g^\top \bx_{i}^*\right)^\top\bSigma^{-1}_{\bY|g}\left(\by_i-\bbeta_g^T\bx_{i}^*\right)\right.\notag\\
       &+\left.\sum_{i=1}^n\sum_{g=1}^Gz_{ig}^{(r)}\left(\left(1-u_{ig}^{(r)}\right)+\frac{u_{ig}^{(r)}}{\sqrt{\eta_{\bY|g}}}\right) \left(\by_i-\bbeta_g^\top \bx_{i}^*\right)^\top\bSigma^{-1}_{\bY|g}\balpha_{\bY|g}\right]\notag\\
        &=-\frac{1}{2}\sum_{i=1}^nz_{ig}^{(r)}\left(\left(1-u_{ig}^{(r)}\right)\mathrm{E}_{2_{\bY|ig}}+\frac{u_{ig}^{(r)}}{\eta_{\bY|g}}\mathrm{\tilde{E}}_{2_{\bY|ig}}\right)\left(-2\bSigma_{\bY|g}^{-1}\left(\by_i-\bbeta_g^\top \bx_{i}^*\right)\bx_i{^*}{^\top}\right)\notag\\
       &+\sum_{i=1}^nz_{ig}^{(r)}\left(\left(1-u_{ig}^{(r)}\right)+\frac{u_{ig}^{(r)}}{\sqrt{\eta_{\bY|g}}}\right) \left(-\bSigma^{-1}_{\bY|g}\balpha_{\bY|g}\bx_i{^*}^\top\right)\notag\\
        &=\sum_{i=1}^nz_{ig}^{(r)}\left(\left(1-u_{ig}^{(r)}\right)\mathrm{E}_{2_{\bY|ig}}+\frac{u_{ig}^{(r)}}{\eta_{\bY|g}}\mathrm{\tilde{E}}_{2_{\bY|ig}}\right)\left(\by_i-\bbeta_g^\top \bx_{i}^*\right)\bx_i{^*}{^\top}\notag\\
       &-\sum_{i=1}^nz_{ig}^{(r)}\left(\left(1-u_{ig}^{(r)}\right)+\frac{u_{ig}^{(r)}}{\sqrt{\eta_{\bY|g}}}\right) \balpha_{\bY|g}\bx_i{^*}^\top\label{eq beta first derivative}
    \end{align}
and substituting $\balpha_{\bY|g}$ in \eqref{eq alphay appendix} back into $\frac{\partial Q_{5}\left(\bbeta,\bSigma_{\bY},\balpha_{\bY},\boldeta_{\bY}|\btheta^{(r)}\right)}{\partial\bbeta_g}$ in \eqref{eq beta first derivative}, which yields

\begin{align}
    &\sum_{i=1}^nz_{ig}^{(r)}\left(\left(1-u_{ig}^{(r)}\right)\mathrm{E}_{2_{\bY|ig}}+\frac{u_{ig}^{(r)}}{\eta_{\bY|g}}\mathrm{\tilde{E}}_{2_{\bY|ig}}\right)\by_i\bx_i{^*}{^\top}-\sum_{i=1}^nz_{ig}^{(r)}\left(\left(1-u_{ig}^{(r)}\right)\mathrm{E}_{2_{\bY|ig}}+\frac{u_{ig}^{(r)}}{\eta_{\bY|g}}\mathrm{\tilde{E}}_{2_{\bY|ig}}\right)\bbeta_g^\top\bx_i\bx_i{^*}{^\top}\notag\\
    &=\sum_{i=1}^nz_{ig}^{(r)}\left(\left(1-u_{ig}^{(r)}\right)+\frac{u_{ig}^{(r)}}{\sqrt{\eta_{\bY|g}}}\right) \left(\frac{\sum_{k=1}^nz_{kj}^{(r)}\left(\left(1-u_{kj}^{(r)}\right)+\frac{u_{kj}^{(r)}}{\sqrt{\eta_{\bY|g}}}\right) \left(\by_k-\bbeta_k^\top \bx_{k}^*\right)}{\sum_{k=1}^nz_{kj}^{(r)}\left(\left(1-u_{kj}^{(r)}\right)\mathrm{E}_{1_{\bY|kj}}+u_{kj}^{(r)}\mathrm{\tilde{E}}_{1_{\bY|kj}}\right)}\right)\bx_i{^*}^\top\label{eq beta first derivative alphaY sub in}.
\end{align}
Using the notation defined in \eqref{eq notation D}--\eqref{eq notation G}, \eqref{eq beta first derivative alphaY sub in} can be rewritten as
\begin{align}
    \sum\limits^n_{k=1}D_i\by_i{\bx_i^*}^\top-\frac{\sum\limits^n_{i=1}G_i\left(\sum\limits^n_{k=1}G_k\by_k\right){\bx_i^*}^\top}{\sum\limits^n_{k=1}F_k}&=
    \sum\limits^n_{k=1}D_i\bbeta_g^\top\bx_i^*{\bx_i^*}^\top-\frac{\sum\limits^n_{i=1}G_i\left(\sum\limits^n_{k=1}G_k\bbeta_g^\top\bx_k^*\right){\bx_i^*}^\top}{\sum\limits^n_{k=1}F_k}\notag\\
     \sum\limits^n_{k=1}D_i\by_i{\bx_i^*}^\top-\frac{\sum\limits^n_{i=1}G_i\left(\sum\limits^n_{k=1}G_k\by_k\right){\bx_i^*}^\top}{\sum\limits^n_{k=1}F_k}&=\bbeta_g^\top\left(
    \sum\limits^n_{k=1}D_i\bx_i^*{\bx_i^*}^\top-\frac{\sum\limits^n_{i=1}G_i\left(\sum\limits^n_{k=1}G_k\bx_k^*\right){\bx_i^*}^\top}{\sum\limits^n_{k=1}F_k}\right)\notag\\
    \left( \sum\limits^n_{k=1}D_i\by_i{\bx_i^*}^\top-\frac{\sum\limits^n_{i=1}G_i\left(\sum\limits^n_{k=1}G_k\by_k\right){\bx_i^*}^\top}{\sum\limits^n_{k=1}F_k}\right)^\top&=\left(
    \sum\limits^n_{k=1}D_i\bx_i^*{\bx_i^*}^\top-\frac{\sum\limits^n_{i=1}G_i\left(\sum\limits^n_{k=1}G_k\bx_k^*\right){\bx_i^*}^\top}{\sum\limits^n_{k=1}F_k}\right)^\top \bbeta_g\label{eq beta first derivative final}. 
\end{align}
By equating \eqref{eq beta first derivative final} to the null vector, and solving for $\bbeta$, we obtain the update.
\end{proof}

\end{document}